\newcommand{\cB}{\mathcal{B}}
\newcommand{\cM}{\mathcal{M}}
\newcommand{\cH}{\mathcal{H}}
\newcommand{\e}[1][q]{\textbf{e}_{#1}}
\newcommand{\Discrepancy}[1]{D_{N}(#1)}
\newcommand{\F}[1][q]{\mathbb{F}_{#1}}
\newcommand{\ck}{\tilde{k}}
\newcommand{\cL}{\mathcal{L}}
\renewcommand{\vec}[1]{\mathbf{#1}}
\newcommand{\cT}{\mathcal{T}}
\newcommand{\vol}{\mathrm{vol}}
\newcommand{\cK}{\mathcal{K}}
\renewcommand{\(}{\left (}
\renewcommand{\)}{\right )}
\newtheorem{assumption}{Assumption}
\begin{document}

\title{A Probabilistic
  Analysis on a Lattice Attack against DSA\thanks{Domingo Gomez-Perez was partially supported by project MTM2014-55421-P from the Ministerio de Economia y Competitividad.}
}


\author{Domingo Gomez-Perez$^{a}$   \and Gu{\'e}na{\"e}l Renault$^{b,c}$}


\institute{
  a: Avd. Los Castros s/n\\
  Facultad de Ciencias\\
  Universidad de Cantabria\\
  Santander, 39005, Spain.\\
  b: Agence Nationale de la S\'ecurit\'e des Syst\`emes d'Information\\
  51, boulevard de La Tour Maubourg\\
  75700 Paris 07 SP, France.\\
  c: Universit\'e Pierre et Marie Curie\\
  LIP6 - \'Equipe projet INRIA/UPMC POLSYS\\
  4 place Jussieu\\
  F-75252 Paris cedex 5 France\\
}

\date{Received: date / Accepted: date}

\maketitle

\begin{abstract}
  Analyzing the security of cryptosystems under attacks based on the malicious
  modification of memory registers is a research topic of high importance. This
  type of attacks may affect the randomness of the secret parameters by forcing
  a limited number of bits to a certain value which can be unknown to the
  attacker. In this context, we revisit the attack on DSA presented by Faugè\`ere,
  Goyet and Renault during the conference SAC 2012: we simplify their method and
  we provide a probabilistic approach in opposition to the heuristic proposed in
  the former to measure the limits of the attack. More precisely, the main
  problem is formulated as the search for a closest vector to a lattice, then we
  study the distribution of the vectors with bounded norms in a this family of
  lattices and we apply the result to predict the behavior of the attack. We
  validated this approach by computational experiments.

  \keywords{DSA \and lattices \and closest vector problem\and exponential sums}
\end{abstract}

\section{Introduction}

The security of the main public-key cryptosystems is based on the difficulty of
solving certain mathematical problems. In this context, the most commonly used
problems come from Number Theory, most notably the integer factorization problem
and the discrete logarithm on finite cyclic groups. DSA and RSA are two of the
most used cryptosystems and their security relays in these problems. There are
many software products like SSH, OpenPGP, S/MIME and SSL which use RSA for
encrypting and signing and DSA for signing. The National Institute of Standards
and Technology has also promoted the use of elliptic curve cryptography, whose
security is based on the discrete logarithm problem in special groups. Although,
the National Security Agency has recently advocated to start replacing these
cryptosystems~\cite{NSA} because of the potential developments in quantum
computing, the perspective is that these cryptosystems are going to be widely
use in the short term.

In this paper, we study a cryptosystem based on the discrete logarithm problem.
Apart from the advances in quantum computing and some recent results on
quasi-polynomial complexity algorithms for solving the discrete logarithm
problem in multiplicative groups of small characteristic
fields~\cite{Jouxetal,Grangeretal1,Grangeretal2,Joux13} and for certain abelian
groups~\cite{sutherland2011structure}, the discrete logarithm problem on finite
fields of large characteristic remains solvable in subexponential time only and
the best algorithm known is given by Adleman and DeMarrais~\cite{DLPAdlemanD93}.

The discrete logarithm problem in elliptic curves seems to be even
harder. Although there are results
for anomalous curves~\cite{cohen2005handbook} and
curves defined in extension fields~\cite{diem2013discrete},
known approaches run in exponential time
(see the survey by Galbraith and Gaudry~\cite{galbraith2015}).

However, this fact does not mean that attacking secure cryptosystems is
hopeless. Many practical attacks are possible because there is additional
information available due to the knowledge of implementation. For example,
Genkin, Shamir and Tromer~\cite{accustic} showed that it is possible to recover
the private key of a 4096-RSA cryptosystem using the sound pattern generated
during the decryption of some chosen data.

These advances point to a new research question: which information should be
added in order to solve this problem in polynomial time. This question has been
in the spotlight for a long time. Indeed, Rivest and
Shamir~\cite{RSAfactoring86} introduced the notion of oracle to formalize this
approach in the context of factorization of RSA modules.

In this article, we focus on the Digital Signature Algorithm (DSA) \cite{DSS94}
whose security is based on the difficulty of the DLP in multiplicative groups of
finite fields (see Section \ref{Def} for more details). The first proposal of
using an oracle on DSA comes from Howgrave-Graham and
Smart~\cite{Howgrave-GrahamS01} using the LLL lattice reduction
algorithm~\cite{LLLfactoring} to take benefit from the knowledge of a small
number of bits in many ephemeral keys. However, these results were only
heuristics, even though confirmed by experimentation. Nguyen and Shparlinski~
\cite{NguyenShparlinski02,NguyenShparlinski03} presented the first polynomial
time algorithm that provably recovers the secret DSA key if about
$\log^{1/2}(q)$ LSB (or MSB) of each ephemeral key are known ($q$ denoting the
order of the chosen group, see Section~\ref{Def}) for a polynomially bounded
number of corresponding signed messages. Other attacks take advantage using the
bits in the ephemeral key and the Fast Fourier
Transform~\cite{de2013using,de2014}. We remark that, although, these
type of attacks normally need less bits, the computational cost is bigger.
However, there is a common point between these attacks. They need
\textit{explicit information} about the bits used and they bypass the problem of
computing discrete logarithms.

At SAC 2012, Faug\`ere, Goyet and Renault~\cite{Faugere12} restricted the power
of the oracle by introducing an implicit attack on DSA. More precisely, they do
not assume that the oracle explicitly outputs bits of the ephemeral keys but
rather provides only \textit{implicit information}. In this implicit scenario,
the oracle is stated in the following way: the attacker knows some signatures
that were computed with ephemeral keys sharing some bits. Instead of an explicit
information related to the value of these shared bits the implicit information
provides only the positions of the shared bits. In an application point of view,
this oracle can be instantiated by an invasive attack where some registers used
by a pseudo random generator would be destroyed by a laser and keep always the
same unknown value during the computation of many signatures. The introduction
of implicit information given by an oracle where first presented by May and
Ritzenhofen~\cite{MayR09} in the context of the RSA cryptosystem and well
studied since then (e.g. \cite{FMR10,sarkar2009further}). The attack proposed in
\cite{Faugere12} is heuristic based. The contribution of this article is to
provide a rigorous proof and analyze the applicability of this attack. This
article presents results for the DSA over a finite field, but we remark that
these techniques can be adapted for the elliptic curve version (ECDSA) as well.

The paper is organized as follows. Section~\ref{Def} gives an overview of DSA,
recalls the attack proposed in~\cite{Faugere12} and presents the main
contribution of this paper. Section~\ref{sec:short_and_discrepancy} presents the
background in uniform distribution theory necessary to understand the
probabilistic approach. Section~\ref{sec:exponential} presents the proofs of our
main results and Section~\ref{sec:experiments} shows the performance of the
attack in experiments and discusses the relation with the theoretical results.

\section{Implicit attack on DSA}\label{Def}
We follow the same notation as in the article~\cite{Faugere12} and go through
the technique proposed there. The next diagram represents the protocol to
generate a public key and signing a message using DSA with finite fields. For
readers not familiar with DSA, we provide the explicit details. \\ \*
 \begin{center}
   \begin{tikzpicture}
     \node [above] at (-1,3.2) {step 0};
     \node [above] at (3,3.2) {step 1};
     \node [above] at (6,3.2) {step 2};
     \draw [thick]  (-1,3) -- (9,3);
     \draw (-1,2.8) -- (-1, 3.2);
     \draw (3,2.8) -- (3, 3.2);
     \draw (6,2.8) -- (6, 3.2);
     \draw (9,2.8) -- (9, 3.2);
     \node[align=left, below] at (1,2.5)%
     {User generates\\ $p$, $q|(p-1)$\\
       $g\in\F[p]$ of order $q$\\
      and key $1\le a\le q.$};
     \node[align=center, below] at (4.5,2.5)%
     {User publishes\\$g,p,q,g^a.$};
     \node[align=right, below] at (7.5,2.5)%
     {To sign $m$\\ user calculate $r,s$\\ as in Equation~\eqref{definition:key}\\
       and send $m,r,s$.};
   \end{tikzpicture}
 \end{center}
Let $M$ be a positive integer, $p$ be a a $L$-bit prime and $q$ be a prime
divisor of $p-1$ satisfying $2^{M-1}<q< 2^{M}$. The integers $p$ and $q$ are
recommended to be chosen such that $(p,q) \in \{(1024,160), (2048,224),(2048,256),(3072,256)\}$ see \cite{FIPS:2013:DSS}.

The finite field of $q$ elements is denoted by $\F[q]$ and each of its element
is uniquely represented by an integer in the range $\{(1-q)/2,\ldots,
(q-1)/2\}$. This also implies that in the sequel, any number modulo $q$ gives a
number in the previous range. For the DSA signature scheme, the user selects a
random element $a\in\F[q]$, which must be kept private, and then publishes $q$,
$p$, an element $g\in\F[p]$ of multiplicative order $q$ and $g^{a}\bmod p$.

For efficiency and security reasons, the bit-size of the messages signed with
DSA has to be the same as the one of $q$ (e.g. $160$ or $256$). Thus for a
general message it is necessary to consider its hash and only sign this hash. In
the sequel, we denote by $\cH$ this hash function and the hash of the message
with $m$ (which its bit-size is assumed to be adapted to the chosen $q$). The
hash function is not important in the results, if it has standard security
requirements.

To sign $m$, the user generates a random number $k\in\F[q]$ (called the
ephemeral key) and calculates,
\begin{equation}
\label{eq:definition_rs}
 r:=g^{k}\mod p\mod q\quad\text{ and }\quad s:=k^{-1}(m+ar)\bmod q.
\end{equation}
The user requires that $r$ and $s$ are not zero, and in this case, $(r,s)$ is a
valid signature. Otherwise, the user generates another $k$ and calculates
$(r,s)$ again.

\subsection{Scenario of the attack}

We suppose that the user wants to sign $n$ messages, whose hashes are
$m_1,\ldots, m_n,$ so he generates $k_1,\ldots,k_n$ and publishes the signatures
$(s_1,r_1),\ldots,$ $(s_n,r_n)$. We also suppose that, due to some malicious
actions of the attacker, the corresponding ephemeral keys differs only in a
block of bits of known length so the attacker knows that,
\begin{equation}
\label{definition:system}
  \begin{split}
    m_1+ar_1-s_1k_1&=0\bmod q,\\
    m_2+ar_2-s_2k_2&=0\bmod q,\\
    \vdots\ \ \ \ & \\
    m_n+ar_n-s_nk_n&=0\bmod q,\\
  \end{split}
\end{equation}
where $k_i$ have the following property,
\begin{equation}
\label{definition:key}
  k_i=k'+2^{t}\tilde{k_i}+2^{t'}k'',\quad
  |\tilde{k}_i|\le 2^{M-\delta},
  \quad\text{ for }i=1,\ldots,n
\end{equation}
with $k'$, $k''$ two unknown fixed $t$-bit and $(M-t')$-bit integers
respectively. Thus, there is a total of $\delta=M-t'+t$ shared bits. Notice that
we can substitute $k_i$ using Equation~\eqref{definition:key} in
Equation~\eqref{definition:system} and eliminate variables $k'$ and $k''$ which
results in the following set of equations,
\begin{equation*}
  \begin{split}
    a\beta_2&=\alpha_2+\ck_1 -\ck_2\bmod q,\\
    a\beta_3&=\alpha_3+\ck_1-\ck_3\bmod q,\\
    & \vdots \\
    a\beta_n&=\alpha_n+\ck_1-\ck_n\bmod q,\\
  \end{split}
\end{equation*}
where $\ck_i$ come from~\eqref{definition:key} and
$\alpha_i,\ \beta_i\in \F$ are public values defined as,
\begin{equation}
\label{eq:linear_equations}
  \begin{split}
    \alpha_i&:=2^{-t}(s_i^{-1}m_i-s_1^{-1}m_1)\bmod q,\\
    \beta_i&:=2^{-t}(s_1^{-1}r_1-s_i^{-1}r_i)\bmod q.\\
  \end{split}
\end{equation}
Next, we can build a lattice using the rows of the following matrix,
\begin{equation}
\label{eq:lattice_faugere}
  \(
  \begin{matrix}
   2^{M-\delta}&0& \alpha_2&\alpha_3&\ldots&\alpha_n\\
   0&2^{-\delta}& \beta_2&\beta_3&\ldots&\beta_n\\
   0&0&q&0&\ldots&0\\
   0&0&0&q&\ldots&0\\
   \vdots&\vdots&\vdots&\vdots&\vdots&\vdots\\
   0&0&0&0&\ldots&q\\
  \end{matrix}
  \),
\end{equation}
and find a short vector in it using an appropriate algorithm,
for example~\cite{schnorr}.
The attacker hopes to recover the following vector,
\begin{equation*}
  (2^{M-\delta},-a2^{-\delta},\ck_2-\ck_1,\ldots, \ck_n-\ck_1),
\end{equation*}
which has a rather short norm.
This is the algorithm proposed in~\cite{Faugere12},
with some discussion depending on the parameters.

\subsection{Contributions}

Our first contribution is a variant of this proposal, we still relate the
recovering of the ephemeral keys in DSA with a lattice problem but we give
rigorous results on the performance of the resulting algorithm.

To give this new attack, we follow the presentation in~\cite{Nguyen02}.
First, we define the lattice $\cL$  by the rows of the following
matrix,
\begin{equation}
  \label{eq:matrix}
  \(
  \begin{matrix}
    2^{-\delta}&\beta_2&\beta_3&\ldots&\beta_n\\
    0&q&\ldots&0&0\\
   \vdots&\vdots&\vdots&\vdots&\vdots\\
    0&0&\ldots&q&0\\
    0&0&\ldots&0&q\\
  \end{matrix}
  \),
\end{equation}
and two vectors $\vec{t},\ \vec{u}$,
\begin{equation}
  \label{eq:twovectors}
  \begin{split}
  \vec{t}&=(0,\alpha_2,\alpha_3,\ldots, \alpha_n),\\
  \vec{u}&=r(a2^{-\delta},\gamma_2,\gamma_3,\ldots, \gamma_n),\\
  \end{split}
\end{equation}
where $\gamma_i:=a\beta_i\bmod q$ for $i=2,\ldots, n$.

Lattice $\cL$ and $\vec{t}$ are known to the attacker and his goal is to recover
$\vec{u}$ using this information. It is straightforward that $\vec{u}\in\cL$ and
$\|\vec{u}-\vec{t}\|\le \sqrt{n}2^{(M-\delta)}$. Thus $\vec{u}$ is a vector in
this lattice, which is close to $\vec{t}$, and we hope that the solution of the
closest vector problem is $\vec{u}$.

If we call $\vec{h}\in\cL$ the solution to the closest vector problem then
$\vec{v}=\vec{u}-\vec{h}$ verifies $\vec{v}\in\cL$ and
$\|\vec{v}\|=\|\vec{u}-\vec{h}\|\le \|\vec{u}-\vec{t}\|+\|\vec{t}-\vec{h}\|\le
\sqrt{n}2^{M-\delta+1}$.

The so-called Gaussian heuristic (see~\cite[page 27, Definition
  8]{nguyen2010lll}) provides a way of analyzing this method's
performance, describing those cases where $\vec{h}$ is expected to be
$\vec{u}$. The shortest vector of lattice $\cL$ is expected to have
norm
\begin{equation*}
  \sqrt{\frac{n+1}{e\pi}}\vol(\cL)^{1/n}\approx
  \sqrt{\frac{n+1}{e\pi}} q^{1-1/n}2^{-\delta/n}
\end{equation*}
so, as soon as
\begin{equation*}
   q^{2-2/(n+1)}2^{-\delta/n}\ge  (e\pi) 2^{2(M-\delta+1)},
\end{equation*}
we hope to recover $\vec{u}$.

Applying the Gaussian heuristic to the lattice defined in
Equation~\eqref{eq:lattice_faugere} is equivalent to this situation
because a short vector $\vec{u}\in\cL$ defines the
short vector $(0,\vec{u})$ which is in the lattice defined
in~\eqref{eq:lattice_faugere}.
This argument is heuristic in nature, so an attacker
who finds  the closest vector in $\cL$ to $\vec{t}$ has no theoretical
guarantee to rediscover $a$. We extend this argument to a
probabilistic-in-nature argument.  This means, we can measure the
success probability of this attack.

\paragraph{Assumption and Statement of the Main Result.}

In order to state our main result, we need that the hash function $\cH$ used in DSA
verifies a property (which is the case in practice):

\begin{assumption}\label{ref:assumption}
  Let $\mathbb{M}$ and $\mathbb{M}'$ be two different messages. The probability
  of a collision
  \[
  \cH(\mathbb{M})=\cH(\mathbb{M}')
  \]
  is supposed to be less than $q^{-d-1}$ where $d$ is some positive constant
  that will be defined later.
\end{assumption}



Under this assumption, we can now state our main result.
\begin{theorem}
  \label{theorem:main}
  Under the notations used above and Assumption~\ref{ref:assumption}, there
  exists $d>0$ such that the probability that $\vec{u}$ is the solution of the
  closest vector problem where $\vec{t}$ is the target vector in $\cL$ is
  greater than $1-q(2^{-\delta+\log n +1}+ q^{-d})^{n-1}$.
\end{theorem}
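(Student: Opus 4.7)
The plan is to translate the failure event ``$\vec{u}$ is not the closest vector of $\cL$ to $\vec{t}$'' into the existence of a short non-zero vector in $\cL$, and then bound the probability of the latter using the near-uniform distribution of the $\beta_i$.

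\smallskip

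The first step is elementary: if the closest lattice vector $\vec{h}\in\cL$ to $\vec{t}$ is different from $\vec{u}$, then $\vec{v}:=\vec{u}-\vec{h}$ is a non-zero element of $\cL$ and, by the triangle inequality applied with the estimate $\|\vec{u}-\vec{t}\|\le \sqrt{n}\,2^{M-\delta}$ already noted in the excerpt, one has $\|\vec{v}\|\le \sqrt{n}\,2^{M-\delta+1}$. It therefore suffices to upper-bound the probability that $\cL$ contains any non-zero vector of norm at most $\sqrt{n}\,2^{M-\delta+1}$.

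\smallskip

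Next, I would parametrise lattice vectors by the basis in \eqref{eq:matrix}: a generic element of $\cL$ has the form $(x_1 2^{-\delta},\ x_1\beta_2+q x_2,\ \ldots,\ x_1\beta_n+q x_n)$ with $x_1,\ldots,x_n\in\mathbb{Z}$. The case $x_1=0$ contributes no short non-zero vector, since any non-trivial integer combination of the rows $q\mathbf{e}_i$ has norm at least $q>\sqrt{n}\,2^{M-\delta+1}$ (using $q>2^{M-1}$ and the mild assumption $\delta>\tfrac12\log n+2$ implicit in the attack regime). For $x_1\neq 0$, the first coordinate forces $|x_1|\le \sqrt{n}\,2^{M+1}=O(q)$, and shortness of the remaining coordinates is equivalent to the $n-1$ congruence conditions
\begin{equation*}
\|x_1\beta_i\|_q \le \sqrt{n}\,2^{M-\delta+1},\qquad i=2,\ldots,n,
\end{equation*}
where $\|\cdot\|_q$ denotes the distance to the nearest multiple of $q$.

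\smallskip

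The heart of the argument is then to show that, for each fixed non-zero $x_1$, the probability over the messages that all $n-1$ of these conditions hold is at most $\bigl(2^{-\delta+\log n+1}+q^{-d}\bigr)^{n-1}$. I would do this using the discrepancy machinery introduced in Section~\ref{sec:short_and_discrepancy}. Assumption~\ref{ref:assumption} makes the hash outputs $\cH(\mathbb{M}_i)$ essentially uniform on $\F$ up to an error of order $q^{-d-1}$; pushing this through the (bi)linear formulas in \eqref{eq:linear_equations} expresses each $\beta_i$ as a near-uniform function of $m_i$ independent of the others. Bounding the joint discrepancy of $(x_1\beta_2,\ldots,x_1\beta_n)\bmod q$ against the uniform distribution on $\F^{\,n-1}$ via the Erd\H{o}s--Tur\'an--Koksma inequality and Weil-type exponential sum estimates then gives the claimed per-$x_1$ bound, the leading term $2^{-\delta+\log n+1}$ coming from the measure of the admissible box and $q^{-d}$ absorbing the discrepancy error. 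A union bound over the $O(q)$ admissible values of $x_1$ multiplies this by $q$ and yields exactly the bound stated in Theorem~\ref{theorem:main}.

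\smallskip

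The main obstacle is this third step: the quantities $\beta_i$ depend on the signatures in a nontrivial way (inversion modulo $q$, multiplication by $r_i$, etc.), so obtaining a uniform distribution statement with quantitative error $q^{-d}$ and near-independence across $i$ is exactly where the exponential-sum estimates of Section~\ref{sec:exponential} enter; this is also where the constant $d$ of Assumption~\ref{ref:assumption} gets pinned down, since one needs the hash uniformity error to dominate the discrepancy produced by those estimates.
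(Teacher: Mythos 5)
Your proposal follows essentially the same route as the paper: the triangle inequality reduces success to the absence of nontrivial lattice vectors of norm $\le \sqrt{n}\,2^{M-\delta+1}$, and the probability of such a vector is bounded by combining the measure of the admissible box with a $q^{-d}$ discrepancy bound for the $\beta_i$ (obtained from Assumption~\ref{ref:assumption} and the Weil-type exponential sum estimates), with a union bound over the $O(q)$ residues of the first coefficient supplying the outer factor of $q$. The only cosmetic difference is that you inline the counting argument that the paper delegates to the cited Nguyen--Shparlinski lemma (Lemma~\ref{lemma:2Nguyen}).
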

This is equivalent to say that when $M < (n-1)\min\{\delta-\log n -1, d M\}$, then
the attack has non negligible probability of being successful.

We remark that this assumption is not a big restriction because the expected
probability of collision on a good hash function is of order $q^{-1}$.

Also, although $d$ is difficult to be evaluated exactly, if $q \ge
p^{\varepsilon}$ for some positive $0<\varepsilon<1/2$, then $d\ge
2^{145-82/\varepsilon}-\log\log p/\log q$ when $p$ is sufficiently big. This
gives a lower bound for the success probability of the algorithm.

Moreover, we conjecture that the value of $d$ is close to $0.5$, so
if $\delta\ ge \log^{1/2}(M)$ and $ n\ge M/\log^{-1/2} (M),$ the probability of
success is greater than $1- 1/n$.

This theorem can be generalized in the case where each ephemeral key is taken
with $\ell$ blocks of bits fixed sharing a total of $\delta$ bits. This case was
also considered in \cite{Faugere12} with a heuristic approach. Again, in order
to obtain a probability of success of the attack, the hash function has to
verify Assumption~\ref{ref:assumption}.

\begin{theorem}
  \label{thm:general}
  Under the notation used above, and generalizing the attack above for
  $k_1,\ldots, k_n$ having $\ell$ blocks of bits sharing a total of $\delta$ bits,
  there exists $d>0$ and a
  probabilistic  algorithm to recover $a$ in polynomial
  time in the   size of the input such that the success probability
  is greater than $1-q(2^{-\delta+\log n+1}+ q^{-d}(\log q)^{\ell})^{n-1}$.
\end{theorem}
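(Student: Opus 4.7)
The plan is to follow the scheme of Theorem~\ref{theorem:main}, adapting only the lattice construction and the short-vector count to the $\ell$-block setting. First I would write each ephemeral key in the form
\[
k_i \;=\; \sum_{j} 2^{t_j}\kappa_j \;+\; \sum_{j} 2^{u_j}\tilde{k}_{i,j},
\]
where the $\kappa_j$'s encode the $\ell$ fixed blocks (unknown constants shared across all signatures) and the $\tilde{k}_{i,j}$'s are the free blocks whose bit-lengths sum to $M-\delta$. Substituting into~\eqref{definition:system} and eliminating the $\kappa_j$'s via the differences $k_i-k_1$ produces the same shape of linear system as in~\eqref{eq:linear_equations}, namely $a\beta_i \equiv \alpha_i+\xi_i \pmod q$ with $\alpha_i,\beta_i\in\F$ public; but each small residue $\xi_i$ now lies in the Minkowski sum
\[
B \;=\; \sum_{j} 2^{u_j}\bigl[-2^{e_j+1},\,2^{e_j+1}\bigr]\;\subset\;\F,
\]
which is a union of $O(\ell)$ translated intervals of total measure bounded by $n\cdot 2^{M-\delta+1}$, rather than a single interval.

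The CVP reduction is then identical: I keep the lattice $\cL$ of shape~\eqref{eq:matrix}, the target $\vec{t}=(0,\alpha_2,\ldots,\alpha_n)$, and the sought vector $\vec{u}$ as in~\eqref{eq:twovectors}. The attack succeeds whenever $\vec{u}$ is the unique closest lattice vector to $\vec{t}$, so failure requires some $a'\in\F$ with $a'\neq a$ to satisfy $a'\beta_i-\alpha_i\bmod q\in B$ for all $i=2,\ldots,n$. A union bound over $a'$ contributes the leading factor $q$.

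The heart of the argument is the per-coordinate probability estimate. In the single-block case of Theorem~\ref{theorem:main} this was bounded by $|B|/q + q^{-d} = 2^{-\delta+\log n+1}+q^{-d}$ via an Erd\H{o}s-Tur\'an-style exponential-sum inequality applied to the equidistribution of $(a'\beta_i)$ modulo $q$. When $B$ is a sum of $O(\ell)$ intervals, the Fourier expansion of $\mathbf{1}_B$ factors across the blocks, and its coefficients are bounded by the product of one-dimensional interval-indicator coefficients; the standard multi-dimensional Koksma-Hlawka / Erd\H{o}s-Tur\'an-Koksma bound then introduces an extra factor $(\log q)^{\ell}$ in the discrepancy error. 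This yields per-coordinate probability at most $2^{-\delta+\log n+1}+q^{-d}(\log q)^{\ell}$, and raising to the $(n-1)$-th power uses the same near-independence of the $\beta_i$ as in Theorem~\ref{theorem:main}, which in turn rests on Assumption~\ref{ref:assumption} to rule out pathological correlations coming from hash collisions.

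The main obstacle will be the multi-dimensional Fourier/exponential-sum bookkeeping: I must show that the exponential sum attached to $\mathbf{1}_B$ factors approximately across the $O(\ell)$ constituent intervals while still exhibiting Weil-type cancellation per factor, so that the per-factor errors aggregate into a total error of order $q^{-d}(\log q)^{\ell}$ rather than degrading to something worse. Once this Fourier-analytic step is in place, the remaining pieces (union bound, collision handling, CVP-to-recovery reduction) are identical to Theorem~\ref{theorem:main}, and the polynomial running time follows from any LLL-based CVP approximation such as Babai's nearest-plane on an LLL-reduced basis of $\cL$.
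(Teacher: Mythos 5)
Your high-level strategy --- reduce to a closest vector problem, take a union bound over the candidate multipliers, and bound the per-coordinate probability by a volume term plus a discrepancy-type error --- is the same as the paper's, and the paper itself only sketches this proof (it defers explicitly to ``the same ideas with more technicalities'' and to Lemma~\ref{lemma:blocks}, which it states without proof for general $\ell$). But you have located the factor $(\log q)^{\ell}$ in the wrong place, and this hides the actual content of the generalization. In the paper, $(\log q)^{\ell}$ is the degradation of the \emph{discrepancy of the set $\Gamma$ from which the $\beta_i$ are drawn}: the $\beta_i$ are functions of $r_i=g^{k_i}\bmod p\bmod q$ with $k_i$ ranging over the set $\cK$ of keys having $\ell$ prescribed blocks, i.e.\ a rank-$\ell$ generalized arithmetic progression, and completing the exponential sum over $\cK$ to a sum over $\F[q]$ via Lemma~\ref{lemma:linear} costs a factor $\frac1q\sum_{u}\bigl|\sum_{k'\in\cK}\e(uk')\bigr|$, which splits into one incomplete geometric sum per free block and yields $(\log q)^{\ell}$ by Lemma~\ref{lemma:incomplete_linear}, on top of the Weil/Garaev machinery of Lemma~\ref{lemma:main}. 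Your proposal instead assumes ``the same near-independence of the $\beta_i$ as in Theorem~\ref{theorem:main}'' and attaches $(\log q)^{\ell}$ to the geometry of the target region $B$. Since the distribution of the $\beta_i$ genuinely changes with $\ell$ (through $\cK$), their $\Delta$-homogeneous distribution must be re-proved for multi-block keys; without that step Lemma~\ref{lemma:2Nguyen} cannot be invoked and the claimed bound does not follow.

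Two further concrete problems. First, the Minkowski sum $B=\sum_j 2^{u_j}\bigl[-2^{e_j+1},2^{e_j+1}\bigr]$ is not a union of $O(\ell)$ intervals: when the blocks are separated it splits into a number of intervals that is exponential in the bit-lengths of the higher free blocks, so a naive estimate of the form $|B|/q+T\Delta$ with $T=O(\ell)$ is not available (your Fourier factorization remark is the right repair, but it then performs the same computation as the paper's treatment of $\cK$, merely on the other side of the congruence). Second, keeping the lattice of shape~\eqref{eq:matrix} and the Euclidean CVP does not work for $\ell>1$: the relevant components of $\vec{u}-\vec{t}$ are the residues $k_i-k_1$ (up to scaling), whose magnitude is governed by the position of the most significant free bit rather than by the total number $M-\delta$ of free bits, so $\vec{u}$ need not be close to $\vec{t}$ at all and the CVP solver has no reason to return it. The generalized attack requires a lattice with one suitably scaled coordinate per free block per signature, as in~\cite{Faugere12}, so that the vector of block differences is genuinely short; this is precisely the ``technicality'' the paper alludes to, and your reduction does not supply it.
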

For practical purposes, the most interesting case is $\ell=1$, so we focus on
this case, the proof of the general case follows the same ideas with more
technicalities.

\section{Short vectors and Discrepancy measures}
\label{sec:short_and_discrepancy}
Coming back at our original problem, we remark that we want to prove that the
solution of the CVP is, in some way unique and this is related with the norm of
the shortest vector in the lattice $\cL$. This lattice has a vector of norm at
most $2^{M-\delta}$ if and only if there exists $b\in\F[q]$, such that
\begin{equation*}
  b\beta_i = h_i\bmod q,\quad\text{ where }|h_i|\le 2^{M-\delta},\
  i=2,\ldots, n.
\end{equation*}
If $\beta_i$ were taken randomly and independently in $\F[q]$, then the
probability of this event is approximately $q 2^{-\delta(n-1)}$. More precisely,
we have the following result from \cite{Nguyen02}.
\begin{lemma}[\cite{Nguyen02}]
  Let $a\in\F$ be different from zero. Choose integers $\beta_2,\ldots, \beta_n$
  uniformly and independently at random in $\F$. Then with probability $P\ge 1-
  q2^{-\delta(n-1)}$ all vectors $\vec{v}\in\cL$ such that
  $\|\vec{v}\|_{\infty}\le 2^{M-\delta}$ are of the form
  \begin{equation*}
    \vec{v}= (b2^{-\delta},0,\ldots,0),
  \end{equation*}
  where $b=0\bmod q$ and $\|\vec{v}\|_{\infty}$ is the maximum of the
  absolutes values of vector $\vec{v}$.
\end{lemma}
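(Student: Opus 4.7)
My plan is to parametrize every short lattice vector by its integer coefficients against the rows of the given basis, split according to whether the first coefficient $b$ is a multiple of $q$, and then bound the remaining case probabilistically.

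A generic $\vec{v}\in\cL$ can be written as
\[
\vec{v}=(b2^{-\delta},\ b\beta_2+c_2q,\ \ldots,\ b\beta_n+c_nq),
\]
with $b,c_2,\ldots,c_n\in\mathbb{Z}$. The hypothesis $\|\vec{v}\|_{\infty}\le 2^{M-\delta}$ forces $|b|\le 2^M$ from the first coordinate and, reading $b\beta_i\bmod q$ in the centered range used for $\F$, also forces $b\beta_i\bmod q\in[-2^{M-\delta},2^{M-\delta}]$ for every $i\ge 2$. If $b\equiv 0\pmod q$ then each coordinate $b\beta_i+c_iq$ is already a multiple of $q$ and, being bounded in absolute value by $2^{M-\delta}<q/2$, must vanish; this recovers exactly the allowed vectors $(b2^{-\delta},0,\ldots,0)$ in the conclusion. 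So the proof reduces to showing that the bad event ``some $\vec{v}$ with $b\not\equiv 0\pmod q$ satisfies the bound'' occurs with probability at most $q\,2^{-\delta(n-1)}$.

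For the bad case I would fix a nonzero residue $b_0\in\F^*$, observing that the event depends on $b$ only through its class modulo $q$, so the final union bound runs over the $q-1$ possible residues rather than over all integers $|b|\le 2^M$. Because $q$ is prime, multiplication by $b_0$ is a bijection of $\F$, so $b_0\beta_i\bmod q$ is uniform in $\F$ whenever $\beta_i$ is; the probability that $b_0\beta_i\bmod q$ falls in the window of roughly $2^{M-\delta+1}$ residues around $0$ is therefore of order $2^{M-\delta+1}/q$, which is essentially $2^{-\delta}$ using $q\asymp 2^M$. Independence of the $\beta_i$'s across $i=2,\ldots,n$ raises this to a joint probability of order $2^{-\delta(n-1)}$ for each fixed $b_0$, and the union bound over $b_0\in\F^*$ multiplies by at most $q$, yielding the claimed estimate.

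The main obstacle here is not conceptual but only bookkeeping: one must check that the bad event genuinely factorizes over residues $b_0$ (so that the union bound is over $\F^*$ and not over the larger integer set $\{b:|b|\le 2^M\}$), and one must absorb the $O(1)$ constants coming from the exact interval size $2\cdot 2^{M-\delta}+1$ and from $q\asymp 2^M$ into the clean exponent $2^{-\delta(n-1)}$. All the substantive content lives in the uniformity of $b_0\beta_i\bmod q$ together with the independence of the $\beta_i$'s.
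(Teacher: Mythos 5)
Your argument is the standard one and is essentially correct. Note that the paper itself gives no proof of this lemma --- it is quoted verbatim (with adapted notation) from Nguyen--Shparlinski --- and your decomposition is exactly how the cited result is established: parametrize $\vec{v}=(b2^{-\delta},b\beta_2+c_2q,\ldots,b\beta_n+c_nq)$, dispose of the case $q\mid b$ deterministically (each tail coordinate is a multiple of $q$ of absolute value below $q$, hence zero), and union-bound the remaining case over the $q-1$ nonzero residues $b_0$, using that $b_0\beta_i\bmod q$ is uniform and the $\beta_i$ are independent. The one piece of bookkeeping you defer does not fully close: with the paper's normalization $2^{M-1}<q<2^M$, the per-coordinate probability is $(2^{M-\delta+1}+1)/q\le 2^{2-\delta}+2^{1-M}$, which is not $\le 2^{-\delta}$, so the argument literally yields $q(2^{2-\delta}+2^{1-M})^{n-1}$ rather than the stated $q\,2^{-\delta(n-1)}$. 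This constant slack is inherited from the transcription of the lemma rather than from your reasoning (in the original the short-vector window is calibrated against $q2^{-\delta}$ rather than $2^{M-\delta}$), and it is harmless for the way the lemma is used later, where the exponent already carries an extra $\log n+1$.
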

Notice that this requires that $\beta_i$ are realizations of random independent
variables in $\F[q]$ and, unfortunately, as it is mentioned in~\cite{Nguyen02},
this is not necessary the case. However, if $\beta_i$ are sufficiently
well-distributed, then the situation remains the same.

In order to keep the paper self-contained, we
recall a way to measure well-distribution through the concept of
\textit{discrepancy}.
\begin{definition}
  Let $\Gamma$ be a multiset of  $N$ points contained  in the
  real interval $[0,1)$, then the discrepancy of the set
  is defined as
  \begin{equation*}
    \Discrepancy{\Gamma} = \sup_{\cB \subseteq [0,1)}
  \left|\frac{T_\Gamma(\cB)} {N} - |\cB|\right|,
  \end{equation*}
where $T_\Gamma(\cB)$ is the number of points of  $\Gamma$
inside the interval
\begin{equation*}
  \cB = [\cB^1, \cB^2) \subseteq [0,1)
\end{equation*}
of volume $|\cB| = \cB^2-\cB^1$ and the supremum is taken over
all such boxes.
\end{definition}
From the definition, it is easy to see that the discrepancy is a number between
$0$ and $1$. The closer the value is to $0$, more uniformly is distributed in
the unit interval. For more information about discrepancy, see~\cite{DrTi}.
We also need to introduce the following definition.
\begin{definition}
  A set $\cT$ of integers is $\Delta$-homogeneously distributed modulo
  $q$ if for any  integer $b$ coprime with $q$ the discrepancy of the
  set,
  \begin{equation*}
    \left \{ \frac{bt\bmod q}{q}\ \mid\
      t\in\cT
  \right \}
  \end{equation*}
  is at most $\Delta.$
\end{definition}
We now state the following lemma. from~\cite{Nguyen02}.
\begin{lemma}[\cite{Nguyen02}]
\label{lemma:2Nguyen}
  Let $a\in\F$ be different from zero. Choose
  integers $\beta_2,\ldots, \beta_n$ uniformly and independently at
  \textit{ random from $\cT$,
    which is $\Delta-$homogeneously distributed modulo $q$. }
  Then with probability $P\ge 1- q(2^{-\delta}+\Delta)^{n-1}$
  all vectors $\vec{v}\in\cL$ such that $\|\vec{v}\|_{\infty}\le
  2^{M-\delta}$ are of the form
  \begin{equation*}
    \vec{v}= (b2^{-\delta},0,\ldots,0),
  \end{equation*}
  where $b=0\bmod q$ and $\|\vec{v}\|_{\infty}$ is the maximum of the
  absolutes values of vector $\vec{v}$.
\end{lemma}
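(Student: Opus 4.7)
The plan is to bound from above the probability of the complementary bad event: that some vector $\vec{v}\in\cL$ with $\|\vec{v}\|_\infty\le 2^{M-\delta}$ fails to be of the trivial form $(b\,2^{-\delta},0,\ldots,0)$ with $q\mid b$. First I would parameterize an arbitrary lattice element using the integer coefficients of the rows of the generating matrix in~\eqref{eq:matrix} as
\begin{equation*}
  \vec{v} \;=\; \bigl(b\,2^{-\delta},\; b\beta_2 - c_2 q,\;\ldots,\; b\beta_n - c_n q\bigr),
\end{equation*}
for some integers $b,c_2,\ldots,c_n$. The bound on the first coordinate forces $|b|\le 2^M$, and the trivial form corresponds precisely to $b\equiv 0\pmod q$. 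Hence the bad event is the union, over nonzero residues $\bar b\in\F_q^{*}$, of the event that $\bar b\,\beta_i\bmod q$ lies in a small window around zero for every $i=2,\ldots,n$.

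Next I would quantify that window. The requirement $|b\beta_i-c_i q|\le 2^{M-\delta}$ means that, after dividing by $q$, the fractional part $\{\bar b\,\beta_i/q\}$ lies in the union of two short intervals near $0$ and near $1$ of total length bounded by $2^{-\delta}$ (using $q>2^{M-1}$ to absorb the prefactor into the stated constant). Because $q$ is prime, $\bar b$ is coprime with $q$, so the hypothesis that $\cT$ is $\Delta$-homogeneously distributed modulo $q$ applies directly: by the definition of discrepancy, a single $\beta_i$ drawn uniformly at random from $\cT$ falls inside this window with probability at most $2^{-\delta}+\Delta$. Independence of the $\beta_i$ then yields, for each fixed $\bar b$, a joint probability of at most $(2^{-\delta}+\Delta)^{n-1}$. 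A union bound over the at most $q-1$ nonzero residues $\bar b\in\F_q^{*}$ gives the desired $q(2^{-\delta}+\Delta)^{n-1}$ upper bound on the bad event, and hence the lemma.

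I expect the main technical obstacle to be the first step: for $|b|\le 2^M$ with $q$ only slightly above $2^{M-1}$, a single residue class modulo $q$ can be represented by more than one integer $b$, so one must verify that these preimages give rise to the \emph{same} short-vector event and therefore do not inflate the union bound beyond $q$ terms. A closely related subtlety is the constant-chasing needed to fold the factor $2\cdot 2^{M-\delta}/q$ into the advertised $2^{-\delta}$ using $2^{M-1}<q<2^M$; once this normalization is handled cleanly the remainder is a routine discrepancy-plus-independence argument.
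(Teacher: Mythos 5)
The paper never proves this lemma: it is quoted from Nguyen and Shparlinski \cite{Nguyen02} and used as a black box, so there is no internal proof to compare against. Your route --- parameterize $\vec{v}=(b2^{-\delta},\,b\beta_2+c_2q,\ldots,b\beta_n+c_nq)$, note that the smallness event depends only on $b\bmod q$, translate the coordinate bound into a window condition on the fractional part of $\bar b\beta_i/q$, apply the discrepancy hypothesis coordinate-wise, then use independence and a union bound over the $q-1$ nonzero residues --- is exactly the standard argument from that source, and the two subtleties you single out (collapsing the integers $|b|\le 2^M$ onto residue classes, and normalizing the window) are indeed the only delicate points. The first you resolve correctly: distinct integer representatives of the same class define the same event, so the union bound has at most $q$ terms.

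The second, however, does not work as you claim, and this is a genuine (if purely quantitative) gap. The condition $|b\beta_i-c_iq|\le 2^{M-\delta}$ places $\{\bar b\beta_i/q\}$ in $[0,2^{M-\delta}/q]\cup[1-2^{M-\delta}/q,1)$, of total length $2\cdot 2^{M-\delta}/q$; since $q<2^{M}$ this length exceeds $2^{1-\delta}$, and the hypothesis $q>2^{M-1}$ only yields that it is below $2^{2-\delta}$ --- it cannot be ``absorbed'' into $2^{-\delta}$, which would require $q\ge 2^{M+1}$. Moreover, the discrepancy in this paper is defined over single intervals, so a union of two intervals costs $2\Delta$, not $\Delta$. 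Your argument therefore establishes $P\ge 1-q(2^{2-\delta}+2\Delta)^{n-1}$ rather than the advertised $1-q(2^{-\delta}+\Delta)^{n-1}$. (In \cite{Nguyen02} the norm threshold is expressed directly as a fraction of $q$, which is why the clean constant appears there; as restated in this paper the lemma really needs the weaker constants your argument delivers, which only shifts the final bounds by $O(1)$ in the exponent.) One more small point: you should say a word about $b\equiv 0\bmod q$ with $b\neq 0$ --- there the remaining coordinates are multiples of $q$ of absolute value at most $2^{M-\delta}<q$, hence zero, which is what forces the ``trivial form'' rather than it holding by definition.
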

To show  the limits of the attack, it is
necessary to show that $\beta_2,\ldots, \beta_n$ defined in
Equation~\eqref{eq:linear_equations} are taken from a set
$\Delta-$homogeneously distributed. For this reason,
we improve~\cite[lemma 10]{Nguyen02}, which could be of independent interest
and show that the following set,
\begin{equation*}
  \Gamma=\left \{ s^{-1}m\bmod q\ \mid\
    \text{ where }s,\ m \text{ are defined in Equation~\eqref{eq:definition_rs}}
  \right \},
\end{equation*}
is $q^{-d}$-homogeneously distributed.
\begin{lemma}
\label{lemma:3Nguyen}
  Fixed a real number $1/2>\varepsilon >0$, then
  for any sufficiently big $p$, there
  exists $d>0$ such that for any $g\in\F[p]$ of multiplicative
  order $q \ge p^{\varepsilon}$, the set
  $\Gamma$ is  $q^{-d}$-homogeneously distributed provided that the
  hash function verifies Assumption~\ref{ref:assumption}.
\end{lemma}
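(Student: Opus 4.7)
The plan is to bound the discrepancy of $\Gamma$ via the Erd\H{o}s--Tur\'an--Koksma inequality, which reduces the problem to estimating, for every integer $c$ with $1\le |c|\le q-1$, exponential sums of the form
\begin{equation*}
S_c \;=\; \sum_{\mathbb{M}} e_q\!\bigl(c b\, s(\mathbb{M})^{-1} m(\mathbb{M})\bigr),
\end{equation*}
where $e_q(x)=\exp(2\pi i x/q)$ and the sum runs over the messages $\mathbb{M}$ together with their (independent) ephemeral keys $k$, so that $m=\cH(\mathbb{M})$, $r=g^{k}\bmod p\bmod q$ and $s=k^{-1}(m+ar)$. Substituting $s^{-1}m=km(m+ar)^{-1}\bmod q$, the task is to bound
\begin{equation*}
S_c \;=\; \sum_{m,k} e_q\!\left(\frac{cb\, k\, m}{m+a\,(g^{k}\bmod p\bmod q)}\right),
\end{equation*}
where $m$ ranges over the image of $\cH$ and $k\in\F_q^{*}$.

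First, I would use Assumption~\ref{ref:assumption} to replace the distribution of $m$ by the uniform distribution on $\F_q$ at the price of an additive error of size $q^{-d-1}$ per message: since a standard hash is collision resistant in the sense assumed, its output is close in total variation to uniform, so summing over $m\in\F_q$ instead of summing over hashes introduces a term that is eventually absorbed into the $q^{-d}$ in the bound. This is exactly the role that the parameter $d$ in the assumption plays, and it explains why the hypothesis appears in the statement.

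Next, for the remaining double sum I would fix $k$ and examine the inner sum over $m$. Writing $u=a(g^{k}\bmod p\bmod q)$ (which depends only on $k$) and performing the change of variable $m\mapsto m-u$, the inner sum becomes a sum over $m\in\F_q$ of $e_q(cbk - cbku m^{-1})$, i.e., a Kloosterman-type sum in $m$. The Weil bound gives $O(q^{1/2})$ whenever $cbku\not\equiv 0\pmod q$, which in turn is controlled by the structure of $u=a g^{k}\bmod p\bmod q$ and the fact that $a\ne 0$ and $k\ne 0$. Summing over $k$ then yields a saving of a power of $q$ over the trivial bound, provided one can control the exceptional $k$ for which the inner sum degenerates.

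The main obstacle, and the reason why only a very small explicit $d$ (and the condition $q\ge p^{\varepsilon}$) can be extracted, is precisely this outer sum over $k$: it mixes the group structures of $\F_p^{*}$ (through $g^k$) and of $\F_q$ (through the final reduction), so there is no purely algebraic bound available. Here I would invoke a deep estimate on ``double modulus'' exponential sums with exponential functions, along the lines of the bounds of Bourgain--Glibichuk--Konyagin based on sum--product estimates, applied to $g^k\bmod p\bmod q$ with $q\ge p^{\varepsilon}$. This is exactly the type of bound that forces the explicit constant $2^{145-82/\varepsilon}$ mentioned after Theorem~\ref{theorem:main}. Combining the Kloosterman bound in the inner sum with this bound in the outer sum produces an estimate of the form $|S_c|\ll N\, q^{-d}$, and feeding this back into Erd\H{o}s--Tur\'an--Koksma finishes the proof, since the supremum over $b$ coprime to $q$ and $c\ne 0$ gives the advertised discrepancy bound $q^{-d}$.
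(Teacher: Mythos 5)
Your overall architecture --- reduce the discrepancy bound to exponential sums via an Erd\H{o}s--Tur\'an/Koksma--Sz\"usz inequality, exploit the rational-function structure of $s^{-1}m = km/(m+ar(k))$ with a Weil-type bound, and invoke a sum--product based estimate to control the double reduction $g^{k}\bmod p\bmod q$ --- matches the paper's, and you correctly locate the origin of the condition $q\ge p^{\varepsilon}$ and of the tiny exponent $2^{145-82/\varepsilon}$. But there is a genuine gap at the step where you ``replace the distribution of $m$ by the uniform distribution on $\F$''. Assumption~\ref{ref:assumption} only bounds the collision probability of the hash, i.e.\ (in the paper's notation) the second moment $W=\sum_{\lambda}H(\lambda)^2$ of the multiset of hash values; it does \emph{not} imply closeness to uniform in total variation. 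A distribution can have essentially minimal collision probability while being supported on, say, a structured subset of half of $\F$, in which case your inner ``Kloosterman-type'' sum over $m$ is an incomplete, weighted sum to which the Weil bound does not apply and for which no cancellation is guaranteed. Since that inner sum over $m$ is the only place your plan extracts a power saving in $q$ for generic $k$, the estimate collapses at this point.

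The paper's proof is engineered precisely to avoid this. It first applies Cauchy--Schwarz in the hash variable $\lambda$, so that the hash distribution enters only through $W$ (which is exactly what the collision-probability assumption controls), and after squaring, the sum over $\lambda$ is extended to \emph{all} of $\F$. The Weil bound (Lemma~\ref{lem:Weil}) is then applied to the complete sum $\sum_{\lambda}\e[q](F_{k_1,k_2}(\lambda))$ of a rational function in $\lambda$ obtained by differencing over pairs of keys $(k_1,k_2)$; the Garaev sum--product estimate (via Lemmas~\ref{lemma:multiplicative_group_bound} and~\ref{lemma:8Nguyen}) is needed only to count the degenerate pairs with $r(k_1)=r(k_2)$ or $r(k)=0$, by bounding the number of solutions of $g^{k}\bmod p\bmod q=\rho$ --- not, as you suggest, to obtain cancellation in an oscillating sum over $k$. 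Two further ingredients you omit: the exponential sum must carry an extra additive term $uk$ (Lemma~\ref{lemma:main}) because $k$ is restricted to the set $\cK$ of keys with prescribed fixed bits, which is handled by the completing technique of Lemmas~\ref{lemma:linear} and~\ref{lemma:incomplete_linear}; and the final discrepancy bound holds only under an explicit constraint relating $W$ to $|\cM|$, which is where the quantitative form of Assumption~\ref{ref:assumption} is actually consumed.
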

The proof of this lemma will be given in Section~\ref{sec:exponential}.
By lemma~\ref{lemma:2Nguyen}, a discrepancy bound
for the set
\begin{equation*}\label{def:BarGamma}
  \overline{\Gamma}=\left \{ \frac{ b s^{-1}m\bmod q}{q}\ \mid\
    \text{ where } s,\ m \text{ are defined in Equation~\eqref{eq:definition_rs}}
  \right \},
\end{equation*}
for any $b$ coprime with $q$ gives a bound for the probability of $\cL$ having a
sufficiently short vector. Lemma~\ref{lemma:3Nguyen} alone is not sufficient to
measure the limits of the attack. Also, it is important to note that to find the
closest vector in a lattice to a given target is an \textit{NP-complete} problem
if the dimension of the lattice is a parameter. The attacker relies on
algorithms that provide only approximations for the closest vector in a lattice
when the dimension is large. In particular, he can use a combination of
Schnorr's modification~\cite{schnorr} of the LLL algorithm with the result of
Kannan to approximate the CVP~\cite{kannan}. We thus have the following result.
\begin{lemma}
  \label{lemma:schnorr}
  There exists a polynomial time algorithm which, given an
  $n$-dimensional full rank lattice $\cL$ and a vector $\vec{r}$, finds a
  a vector $\vec{v}\in\cL$ satisfying the inequality,
  \begin{equation*}
    \|\vec{r}-\vec{v}\|\le 2^{O(n\log^2\log(n)/\log(n))}
    \min\{\|\vec{r}-\vec{h}\|\ \mid\ \vec{h}\in\cL\},
  \end{equation*}
  where the implied constants are absolute.
\end{lemma}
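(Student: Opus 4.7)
The plan is to prove this lemma by combining two classical ingredients cited immediately before the statement: Kannan's polynomial-time reduction from approximate CVP to approximate SVP, and Schnorr's hierarchy of polynomial-time SVP-approximation algorithms whose best approximation factor is $2^{O(n\log^2\log n/\log n)}$. Since the statement is in essence a composition of these two results, the work lies not in inventing new technique but in verifying that the approximation factor is preserved through the reduction and that the running time remains polynomial.

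First I would invoke Kannan's embedding. Given a basis $\vec{b}_1,\ldots,\vec{b}_n$ of $\cL$ together with the target $\vec{r}$, form the $(n+1)$-dimensional lattice $\cL'$ generated by $(\vec{b}_i,0)$ for $i=1,\ldots,n$ and by $(\vec{r},C)$, where $C$ is a scalar of the order of a guessed upper bound on $\operatorname{dist}(\vec{r},\cL)$. Any short nonzero vector of $\cL'$ whose last coordinate equals $\pm C$ has the form $(\vec{r}-\vec{v},\pm C)$ with $\vec{v}\in\cL$, and the choice of $C$ (found by binary search over $O(\log)$ scales) ensures that a $\gamma$-approximate shortest vector of $\cL'$ yields an $O(\gamma)$-approximate closest vector of $\cL$. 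Equivalently, one can apply Babai's nearest plane algorithm to a $\gamma$-reduced basis of $\cL$, which achieves the same asymptotic guarantee without enlarging the dimension. Either route reduces the approximate CVP in $\cL$ to the approximate SVP in a lattice of dimension $n+O(1)$, losing only a constant factor in the approximation ratio.

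Second, I would apply Schnorr's semi-block $2k$-reduction from \cite{schnorr} to the resulting lattice. For a block size $k$, this algorithm runs in polynomial time whenever $k=O(\log n/\log\log n)$ and, using the Hermite-constant bound $\gamma_{2k}\le O(k)$, yields a first basis vector within factor $\gamma_{2k}^{(n-1)/(2k-1)}$ of the true shortest vector. Optimizing $k$ of this order gives exactly the SVP-approximation factor $2^{O(n\log^2\log n/\log n)}$. Composing this with the CVP-to-SVP reduction above produces the stated bound. The main (mild) obstacle is the bookkeeping that tracks the constants absorbed into the exponent $O(n\log^2\log n/\log n)$ across both steps, and the verification that the binary search for $C$ and the chosen block size $k$ together leave the overall running time polynomial; both checks are routine given the cited results of Kannan and Schnorr, so no new difficulty arises.
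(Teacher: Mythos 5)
The paper gives no proof of this lemma at all: it is stated as a known fact, justified only by the sentence citing the combination of Schnorr's modification of LLL with Kannan's work on approximating CVP, which are exactly the two ingredients you assemble. Your sketch is therefore essentially the paper's (implicit) argument with the details filled in; the only caution is that the embedding variant you mention can lose more than a constant factor for \emph{approximate} CVP, so the Babai nearest-plane route on a Schnorr-reduced basis is the version to keep.
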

This lemma shows that we must also consider the cases where the vector
found is not so short, and this is the reason that proving results for
$\delta$ small is difficult.

There is also an added difficulty, coming from $\delta$. Not all the bits of the
ephemeral key $k_i$ are taken randomly and independently, indeed only $N-\delta$
are taken at random and the rest are fixed. The case of many blocks of shared
bits is difficult because there are several blocks of bits which are fixed.
However, in this case, one can prove a bound for the discrepancy of the set
$\overline{\Gamma}$.
\footnote{We notice that $\beta_i$ are elements of the set
  $\overline{\Gamma}$
plus  $s_1^{-1}m_1$ and then reduced modulo $q$. But this does not
change the value of the discrepancy.}
We cite the following lemma without proof because its independent
interest and mention
that this follows the same lines as the previous result.
\begin{lemma}
\label{lemma:blocks}
  Fixed a real number $1/2>\varepsilon >0$, then for any sufficiently big $p$,
  there exists $d>0$ such that for any $g\in\F[p]$ of multiplicative order $q\ge
  p^{\varepsilon}$, $\Gamma$ is a $q^{-d}(\log q)^{\ell}$-homogeneously
  distributed \textit{when $k$ is taken with $\ell$ blocks of bits fixed }
  provided that the hash function verifies Assumption~\ref{ref:assumption}.
\end{lemma}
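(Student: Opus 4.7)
The plan is to follow the same strategy as for Lemma~\ref{lemma:3Nguyen}, adapted to the presence of several free bit-blocks in the ephemeral key, and to pay the extra factors of $\log q$ that arise when one restricts each such block to a sub-interval.

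First, I would apply the Erd\H{o}s--Tur\'an--Koksma inequality to reduce the discrepancy of $\overline{\Gamma}$ to an estimate for exponential sums of the shape
\begin{equation*}
S_h \;=\; \sum_{k,\,m}\e(h b\, s^{-1}m),
\end{equation*}
where the sum ranges over admissible pairs $(k,m)$ and $h$ is a non-zero integer with $|h|\le H$ for an $H$ to be optimized at the end. Substituting $s=k^{-1}(m+ar)$ and $r=g^k\bmod p\bmod q$ turns this into a bilinear-like sum in $m$ and $k$ whose dependence on $k$ passes through the reduction $g^k\bmod p\bmod q$.

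Next, I would parametrize the admissible $k$ by writing
\begin{equation*}
k \;=\; K_{\text{fix}} \;+\; \sum_{j}\tilde k_j\,2^{t_j},
\end{equation*}
with $K_{\text{fix}}$ determined by the (unknown) shared bits and each $\tilde k_j$ ranging over an interval of $2^{w_j}$ consecutive integers determined by the positions of the $\ell$ fixed blocks. To decouple these interval constraints from the multiplicative structure I would expand each indicator via a standard finite Fourier series,
\begin{equation*}
\mathbf{1}_{[0,\,2^{w_j})}(x)
\;=\; \frac{1}{q}\sum_{|u_j|\le q/2}\widehat{\mathbf{1}}_j(u_j)\,\e(u_j x),
\qquad
|\widehat{\mathbf{1}}_j(u_j)|\le \min\!\Bigl(2^{w_j},\tfrac{q}{|u_j|}\Bigr),
\end{equation*}
so that $\sum_{u_j}|\widehat{\mathbf{1}}_j(u_j)|/q \ll \log q$. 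Performing this expansion once per interval constraint introduced by the $\ell$ fixed blocks produces the announced factor $(\log q)^{\ell}$ of overhead.

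After these Fourier reductions, what remains for each tuple of frequencies is a complete exponential sum over the full cyclic range of each $\tilde k_j$ (and of $m$), twisted by an additive character in $k$ coming from the Fourier expansion. The resulting sum has essentially the same structure as the one analyzed in the proof of Lemma~\ref{lemma:3Nguyen}: it involves the composite map $k\mapsto g^{k}\bmod p\bmod q$ combined with the hash variable $m$, and is bounded by the same mechanism, namely a combination of Bourgain--Konyagin-type estimates for sums with $k\mapsto g^k\bmod p$ and the dispersion of hash values guaranteed by Assumption~\ref{ref:assumption}. This yields a power saving $q^{-d}$ for each complete sum, with the same exponent $d$ as in Lemma~\ref{lemma:3Nguyen}. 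Combining these estimates gives $|S_h|/N \ll (\log q)^{\ell}\,q^{-d}$, and inserting this into Erd\H{o}s--Tur\'an--Koksma with $H$ a suitable polynomial in $q$ delivers the claimed bound $q^{-d}(\log q)^{\ell}$ on the discrepancy, uniformly in $b$ coprime with $q$. The main obstacle is the one already encountered for $\ell=1$: controlling the complete exponential sum with the composite map $k\mapsto g^k\bmod p\bmod q$ requires additive-combinatorial input in $\F[p]$ in lieu of standard Weil-type bounds; the extra work for general $\ell$ is essentially bookkeeping --- one factor of $\log q$ paid per interval restriction --- so most of the effort is spent re-invoking the hard estimate established for Lemma~\ref{lemma:3Nguyen}.
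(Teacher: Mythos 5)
Your proposal follows essentially the same route as the paper: a Koksma--Sz\"usz/Erd\H{o}s--Tur\'an reduction to exponential sums, completion of the sum over the restricted ephemeral keys by Fourier-expanding each interval indicator (one factor of $\log q$ per fixed block, giving $(\log q)^{\ell}$), and then invoking the core complete-sum estimate of Lemma~\ref{lemma:main} --- which in the paper combines Garaev's multilinear subgroup bound, Cauchy--Schwarz over the hash values with Assumption~\ref{ref:assumption} controlling the collision count, and the Weil bound for the resulting rational function. The argument is correct and matches the paper's proof of the $\ell=1$ case and its indicated extension to general $\ell$.
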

As explained above, we focus on the case $\ell=1$ and thus we will prove this
lemma in the next section for the case where $\ell=1$.

This result gives useful information whenever the discrepancy of the set
$\overline{\Gamma}$ is smaller than $2^{-\sqrt{\log q}}$. We see that if $\ell$
is fixed and $q$ and $\delta$ are big enough, then the attack has a high
probability of success.

\section{Main results}
\label{sec:exponential}
\subsection{Exponential Sums and Discrepancy}
In this section, we study the discrepancy of the set
$\overline{\Gamma}$ (see the definition on page
\pageref{def:BarGamma}) in the unit interval,
Typically the bounds on the discrepancy of a
sequence  are derived from bounds of exponential sums
with elements of this set. The relation is made explicit in
the celebrated {\it Koksma--Sz\"usz inequality\/} which we  present in the
following form.
\begin{lemma}[Corollary~3.11, \cite{Nied2}]
\label{lem:Kok-Szu}
Let  
$\Xi$ be a set of $N$ points in the range $[-q/2,\ldots, q/2]$
 such that  there exits a real
number $B$  with the property
\begin{equation*}
\left | \sum_{x\in\Xi} \exp \( 2 \pi i \frac{ u x }{q} \)
\right |\le B,
\end{equation*}
for any integer  $u$ with $u\neq 0 $ and $-q/2<u\le q/2$.
Then,
the discrepancy $\Discrepancy{\overline{\Xi}}$ where,
\begin{equation*}
  \overline{\Xi}=\left \{\ \frac{x}{q}\ |\ x\in\Xi\ \right\},
\end{equation*}
satisfies
\begin{equation*}
\Discrepancy{\Xi} \ll \frac{B\log q}{N},
\end{equation*}
where the implied constant is absolute.
\end{lemma}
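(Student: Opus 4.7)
The plan is to reduce the discrepancy of $\overline{\Xi}$ to the hypothesised exponential-sum bound by approximating the indicator function of an arbitrary subinterval $\cB = [\cB^1,\cB^2)\subseteq [0,1)$ by trigonometric polynomials of bounded degree; this is the classical route underlying the Erd\H{o}s--Tur\'an--Koksma inequality.

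More precisely, I would fix an integer parameter $H\ge 1$ (to be optimised at the end) and invoke the Selberg--Beurling extremal majorant and minorant: there exist trigonometric polynomials $f_H^\pm$ of degree at most $H$ with $f_H^-\le \mathbf{1}_{\cB}\le f_H^+$, whose constant Fourier coefficients satisfy $\widehat{f_H^\pm}(0)=|\cB|+O(1/H)$ and whose nonzero Fourier coefficients obey $|\widehat{f_H^\pm}(u)|\le C/|u|$ for an absolute constant $C$. Sandwiching $T_\Xi(\cB)/N = N^{-1}\sum_{x\in\Xi}\mathbf{1}_{\cB}(x/q)$ between the averages $N^{-1}\sum_{x\in\Xi} f_H^\pm(x/q)$ and expanding in Fourier series reduces the analysis of $T_\Xi(\cB)/N - |\cB|$ to estimating
$$\sum_{0<|u|\le H}\widehat{f_H^\pm}(u)\cdot\frac{1}{N}\sum_{x\in\Xi}\exp\!\left(2\pi i\frac{ux}{q}\right) \;+\; O\!\left(\frac{1}{H}\right).$$

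Now I would apply the hypothesis: the inner exponential sum is bounded by $B$ for every nonzero $u$ with $-q/2<u\le q/2$, and depends on $u$ only modulo $q$. Choosing $H=q$ therefore collapses the outer $u$-sum to a range of length $q$, and the triangle inequality together with $\sum_{1\le |u|\le q/2}|u|^{-1}\ll \log q$ gives
$$\left|\frac{T_\Xi(\cB)}{N}-|\cB|\right| \;\le\; \frac{CB}{N}\sum_{1\le |u|\le q/2}\frac{1}{|u|}+O\!\left(\frac{1}{q}\right)\;\ll\; \frac{B\log q}{N}.$$
Taking the supremum over all boxes $\cB$ produces the stated bound on $\Discrepancy{\overline{\Xi}}$.

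The only nontrivial ingredient is the existence of the Selberg--Beurling majorant/minorant with controlled $L^1$-gap and coefficient decay, but this is a classical off-the-shelf construction and in the one-dimensional setting could even be bypassed by citing the Erd\H{o}s--Tur\'an inequality directly. The main conceptual point, and the reason for the final $\log q$ factor rather than some $\log H$ with $H$ free, is that the hypothesis constrains the exponential sums only for $|u|\le q/2$, so $H=q$ is the natural cutoff; no genuine obstacle arises beyond the bookkeeping of absolute constants.
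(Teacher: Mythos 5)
Your argument is correct and is essentially the standard proof of the cited result: the paper itself gives no proof of this lemma (it is quoted as Corollary~3.11 of Niederreiter's book), and the Selberg--Beurling majorant/minorant route you describe, combined with the periodicity of the exponential sum in $u$ modulo $q$ and the harmonic-sum bound $\sum_{1\le |u|\le q/2}|u|^{-1}\ll\log q$, is exactly the Erd\H{o}s--Tur\'an/Koksma--Sz\"usz mechanism underlying that reference. The only bookkeeping worth tightening is the cutoff: taking $H=q-1$ rather than $H=q$ avoids the frequencies $u\equiv 0\pmod q$, for which the exponential sum equals $N$ rather than being bounded by $B$, and the residual $O(1/H)$ term is absorbed using $N\le q+1$.
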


For a positive integer $r$ we denote
\begin{equation*}
  \e[r](z) = \exp(2 \pi i z/r),\text{ where z is an integer}.
\end{equation*}

Notice that for a prime $r=q$, the function $\e(z)$ is an additive character
of $\F$. Exponential sums are well studied and used
extensively in number theory, uniform distribution theory and many
other areas because of their applications. In the following lemmas, we
outline several known properties.
\begin{lemma}[Exercise 11.a,Chapter 3, \cite{vinogradov}]
\label{lemma:linear}
  Then, for any set $\cK\subset\F$ and $k\in\F$, the formula
  \begin{equation*}
    \sum_{u\in\F}\sum_{k'\in\cK}\e[q]\(u(k-k')\)=
    \begin{cases}
      0 & \text{ if }k\not\in\cK,\\
      q & \text{ otherwise,}
    \end{cases}
  \end{equation*}
  holds.
\end{lemma}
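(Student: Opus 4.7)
The plan is to derive this identity as a direct application of the orthogonality of additive characters on $\F[q]$. First I would swap the order of summation, rewriting the double sum as
\begin{equation*}
  \sum_{k' \in \cK} \sum_{u \in \F[q]} \e[q]\bigl(u(k-k')\bigr),
\end{equation*}
so that the problem reduces to evaluating, for each fixed $k'$, the inner sum $S(y) := \sum_{u \in \F[q]} \e[q](uy)$ with $y = k - k' \bmod q$.

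The inner sum is a geometric series. If $y \equiv 0 \pmod q$, every term equals $1$ and $S(y) = q$. Otherwise $\e[q](y) \neq 1$, and using $\e[q](qy) = \exp(2\pi i y) = 1$ the standard geometric-series identity yields
\begin{equation*}
  S(y) = \frac{\e[q](qy) - 1}{\e[q](y) - 1} = 0.
\end{equation*}
Substituting back, the double sum collapses to $q$ times the cardinality of $\{k' \in \cK : k' \equiv k \pmod q\}$. Since $\cK$ is treated as a subset of $\F[q]$ (no multiplicities), this count equals $1$ when $k \in \cK$ and $0$ otherwise, which is exactly the two cases in the statement.

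There is essentially no obstacle here; the lemma is a textbook orthogonality relation and the only subtlety is the convention that $\cK$ carries no repeated elements, which turns the final count into a $\{0,1\}$ indicator. The value of the lemma lies in how it is used later in Section~\ref{sec:exponential}: it provides the mechanism to rewrite the indicator of a congruence class modulo $q$ as an average of additive characters, which is the standard first step toward applying the Koksma--Sz\"usz inequality (Lemma~\ref{lem:Kok-Szu}) to the set $\overline{\Gamma}$.
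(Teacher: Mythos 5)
Your proof is correct: the swap of summation order, the geometric-series evaluation of $\sum_{u\in\F}\e[q](u(k-k'))$ as $q$ or $0$, and the observation that $\cK$ being a genuine set makes the surviving count a $\{0,1\}$ indicator together give exactly the stated identity. The paper itself offers no proof, citing the result as an exercise from Vinogradov, and your argument is the standard orthogonality-of-additive-characters derivation that the reference intends.
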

\begin{lemma}[Exercise 11.c,Chapter 3, \cite{vinogradov}]
  \label{lemma:incomplete_linear}
  For any $1\le h\le q$ and $u\in\F$, $u\neq 0$, the following inequality,
  \begin{equation*}
  \frac{1}{q}\sum_{x\in\F}\left |   \sum_{y=1}^{h}\e(uxy)\right |\ll
  \log q,
\end{equation*}
holds, where the implicit constant is absolute.
\end{lemma}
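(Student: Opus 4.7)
The plan is to evaluate the inner sum as a geometric progression and reduce the estimate to a classical harmonic–type bound. First I would observe that when $ux\equiv 0\pmod q$ the inner sum equals $h$, and otherwise it is a geometric series whose modulus satisfies
\[
\left|\sum_{y=1}^{h} \e[q](uxy)\right| = \left|\frac{\e[q](uxh)-1}{\e[q](ux)-1}\right| \le \frac{1}{|\sin(\pi ux/q)|} \le \frac{1}{2\|ux/q\|},
\]
where $\|\cdot\|$ denotes distance to the nearest integer and I use the elementary inequality $|\sin \pi t|\ge 2\|t\|$ on $[-1/2,1/2]$. Combining both cases, for every $x\in\F$,
\[
\left|\sum_{y=1}^{h} \e[q](uxy)\right| \le \min\!\left(h,\, \frac{1}{2\|ux/q\|}\right),
\]
with the convention that the expression is $h$ when $ux\equiv 0\pmod q$.

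Next, because $q$ is prime and $u\not\equiv 0\pmod q$, the map $x\mapsto ux\bmod q$ permutes $\F$. Separating the $x\equiv 0$ contribution from the rest and changing variables, I obtain
\[
\sum_{x\in\F}\left|\sum_{y=1}^{h}\e[q](uxy)\right| \le h + \sum_{x=1}^{q-1}\min\!\left(h,\, \frac{1}{2\|x/q\|}\right).
\]
For $x\in\{1,\ldots,q-1\}$ the value $q\|x/q\| = \min(x,\, q-x)$ runs through each integer in $\{1,\ldots,\lfloor q/2\rfloor\}$ at most twice, hence the right-hand sum is bounded by
\[
2\sum_{j=1}^{\lfloor q/2\rfloor}\min\!\left(h,\, \frac{q}{2j}\right).
\]

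To conclude I would split this last sum at $j_0 = \lceil q/(2h)\rceil$. On the range $j\le j_0$ the minimum equals $h$, contributing at most $h\cdot j_0 \le q/2 + h \ll q$. On the range $j>j_0$ the minimum equals $q/(2j)$, contributing at most $(q/2)\sum_{j>j_0}1/j \ll q\log q$ via the standard harmonic bound. Adding the two pieces yields
\[
\sum_{x\in\F}\left|\sum_{y=1}^{h}\e[q](uxy)\right| \ll q\log q,
\]
and dividing by $q$ gives the announced estimate with an absolute implied constant. The argument is essentially the classical Vinogradov reduction for incomplete linear exponential sums; the only delicate point is invoking $|\sin \pi t|\ge 2\|t\|$ with the right constant so that the final $\log q$ factor carries an absolute implicit constant, and the harmonic-sum split requires no more than the standard estimate $\sum_{j\le N} 1/j \le 1+\log N$.
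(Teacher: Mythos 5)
Your proof is correct: the geometric-series evaluation, the bound $\min(h,1/(2\|ux/q\|))$ via $|\sin\pi t|\ge 2\|t\|$, the change of variables using that $x\mapsto ux$ permutes $\F[q]$ for $u\neq 0$, and the harmonic-sum split at $j_0=\lceil q/(2h)\rceil$ together give $\ll q\log q$ with an absolute constant, which is exactly what is claimed. The paper supplies no proof of this lemma at all (it is quoted from Vinogradov, Exercise 11.c, Chapter 3), and your argument is precisely the classical one behind that citation, so there is nothing to compare against and no gap to report.
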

We will need the following version of the \textit{Weil bound}.
\begin{lemma}[\cite{morenomoreno}]
\label{lem:Weil}
Let $F/G$ be a non-constant univariate rational function over $\F$
and let $v$ be the number of distinct roots of the polynomial $G$ in
the algebraic closure of $\F$. Then
\begin{equation*}
  \left |\sum_{x\in\F}\hskip-15 pt {\phantom{{\Sigma^2}}}^*
  \e\left (\frac{F(x)}{G(x)}\right
) \right|\le (\max (\deg F,\deg G)+ v^*-2)\ q^{1/2}+\rho,
\end{equation*}
where $\Sigma^*$ indicates that the poles of $F/G$ are excluded from
the summation,
$v^*=v$ and $\rho=1$ if $\deg F\le \deg G$, otherwise
$v^*=v+1$ and $\rho=0$.
\end{lemma}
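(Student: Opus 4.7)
The plan is to deduce this estimate from Weil's Riemann hypothesis for smooth projective curves over $\F$. The bridge between an exponential sum of the form $\sum_{x}\e(F(x)/G(x))$ and a curve is provided by the classical Artin--Schreier construction, which turns the additive character $\e$ into a geometric object whose global structure is controlled by the cohomology of a curve.

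First I would introduce the affine curve $C_{0}/\F$ defined by $y^{q}-y=F(x)/G(x)$, viewed as an $\F$-Galois cover of $\mathbb{A}^{1}$ punctured at the poles of $F/G$, and I would denote by $\tilde{C}$ its smooth projective completion. Orthogonality of the additive characters of $\F$, applied fibre by fibre over $\mathbb{P}^{1}$, yields an identity of the shape
\begin{equation*}
  \sum_{x\in\F}\hskip-15pt{\phantom{\Sigma^{2}}}^{*}\, \e\!\left(\frac{F(x)}{G(x)}\right) \;=\; \frac{1}{q-1}\bigl(\#\tilde{C}(\F)-(q+1)-R\bigr),
\end{equation*}
where $R$ is an $O(v^{*})$ boundary term accounting for the fibres above the poles of $F/G$ and above the point at infinity. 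Any nontrivial additive character is handled by the substitution $F\mapsto tF$, so the whole sum is controlled uniformly by $\#\tilde{C}(\F)$.

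Next I would bound the genus $g$ of $\tilde{C}$ via the Riemann--Hurwitz formula applied to $\tilde{C}\to\mathbb{P}^{1}$, together with the standard computation of the different at each wild ramification point. Wild ramification occurs exactly above the poles of $F/G$ and, if $\deg F>\deg G$, also above $\infty$; summing the local contributions yields an inequality of the form $2g\le(q-1)(\max(\deg F,\deg G)+v^{*}-2)$. Invoking Weil's bound $|\#\tilde{C}(\F)-(q+1)|\le 2g\, q^{1/2}$ and combining with the preceding display then produces the stated inequality, with $\rho\in\{0,1\}$ absorbing the choice of whether the point at infinity is itself a pole being excluded from the summation.

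The main obstacle is the careful bookkeeping of the ramification data. Extracting the exact constants $v^{*}-2$ and $\rho$, rather than some looser $O(\max(\deg F,\deg G)+v)$ expression, requires a delicate local analysis of the Artin--Schreier cover at each ramified point; in particular, when a pole order is divisible by $q$ the wild and tame parts of the Swan conductor must be separated, and it is precisely at this step that the explicit formulation of Moreno--Moreno refines the cruder estimate one obtains by applying Weil's original argument directly.
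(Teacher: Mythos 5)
The paper gives no proof of this lemma --- it is quoted verbatim from the cited reference of Moreno and Moreno --- so there is nothing internal to compare against; your proposal must stand on its own. Your overall strategy (Artin--Schreier cover, Riemann--Hurwitz, Weil's Riemann hypothesis for curves) is indeed the standard route and is in the spirit of the cited source, but your bridging identity is wrong in a way that breaks the argument. Counting points on $y^{q}-y=F(x)/G(x)$ fibre by fibre gives $\#C_{0}(\F)=N_{0}+\sum_{t\neq 0}S_{t}$, where $N_{0}$ is the number of non-pole values of $x$ and $S_{t}$ denotes the sum $\sum\e(tF(x)/G(x))$ over those $x$. Dividing $\#\tilde{C}(\F)-(q+1)-R$ by $q-1$ therefore controls only the \emph{average} of the $q-1$ nontrivial character sums, not the individual sum $S_{1}$ that the lemma bounds. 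Since the $S_{t}$ are complex numbers with uncontrolled phases, massive cancellation among them is possible, and a bound on their sum says nothing about any single one; the substitution $F\mapsto tF$ merely permutes the $S_{t}$ and does not isolate one of them.

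The missing idea is the factorization of the numerator of the zeta function of $\tilde{C}$ into the Artin $L$-functions $L(T,\psi_{t})$ attached to the individual nontrivial additive characters. Each $S_{t}$ is then (up to the boundary term $\rho$) minus the sum of the inverse roots of $L(T,\psi_{t})$, each of modulus $q^{1/2}$ by the Riemann hypothesis for $\tilde{C}$, and the number of these roots is $\max(\deg F,\deg G)+v^{*}-2$ by the Euler--Poincar\'e/conductor computation (equivalently, $2g/(q-1)$ for this degree-$q$ cover, which is where your Riemann--Hurwitz count re-enters). With that step inserted, the rest of your bookkeeping --- the genus/conductor count, the interpretation of $\rho$ as the contribution of the point at infinity when it is not a pole, and the caveat about pole orders divisible by the characteristic, where one must first normalize $F/G$ modulo functions of the form $h^{q}-h$ --- is correct in outline.
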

In order to prove the main result of this paper, we will need to study the
number of solutions of the left part of Eq.~\eqref{eq:definition_rs} when $k$
has some fixed bits. Nguyen and Shparlinski~\cite[lemma 8]{Nguyen02} proved a
similar result but we prove a stronger bound using the following result.
\begin{lemma}[Theorem 4.1, \cite{garaev2010sums}]
  \label{lemma:Garaev}
  Let $3\le m\le 1.44\log\log p$ be a positive integer, and $c>0$ an arbitrary
  fixed constant. Suppose that $X_1,\ldots, X_m$ are subsets of $\F[p]$
  not containing $0$ and satisfying the condition
  \begin{equation*}
    |X_1|\cdot |X_2|\cdot \(|X_3|\cdots |X_m|\)^{1/81}>p^{1+c}.
  \end{equation*}
  Then,
  \begin{equation*}
    \left | \sum_{x_1\in X_1}\cdots \sum_{x_m\in X_m} \e[p](x_1\cdots x_m)\right |
    \le |X_1|\cdots |X_m| p^{-0.45c/2^m}.
  \end{equation*}
\end{lemma}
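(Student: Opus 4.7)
The plan is to derive the bound by combining an iterated Cauchy--Schwarz (equivalently H\"older) reduction with a sum--product estimate for bilinear exponential sums. First I would isolate one variable by writing
\begin{equation*}
  S := \sum_{x_1 \in X_1} \cdots \sum_{x_m \in X_m} \e[p](x_1 \cdots x_m) = \sum_{x_1 \in X_1} \sum_{\lambda \in \F[p]^*} N(\lambda)\, \e[p](x_1 \lambda),
\end{equation*}
where $N(\lambda) := \#\{(x_2,\dots,x_m)\in X_2\times\cdots\times X_m : x_2\cdots x_m = \lambda\}$. Completing the sum over $x_1$ to all of $\F[p]^*$ and squaring via Cauchy--Schwarz yields, after orthogonality, a bound of the form $|S|^2 \ll |X_1|\, p\, E(X_2,\dots,X_m)$, where $E(X_2,\dots,X_m) = \sum_{\lambda} N(\lambda)^2$ is the multiplicative energy of the product set.

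Next I would bound this energy recursively. The energy counts solutions of $x_2 \cdots x_m = y_2 \cdots y_m$ with $x_i, y_i \in X_i$; rewriting it via additive characters and separating one pair $(x_i, y_i)$ reduces the problem to the same shape but with one fewer multiplicative factor, at the cost of a Cauchy--Schwarz loss which doubles the multiplicities of the remaining sets and halves the available power saving. After $m-2$ such steps one is left with an exponential sum over only two factor sets $A, B$, and one applies a Bourgain--Glibichuk--Konyagin style bilinear bound: whenever $|A|\cdot|B| \ge p^{1+c}$ one has $\left|\sum_{a\in A}\sum_{b\in B}\e[p](ab)\right| \le |A||B|\, p^{-c'}$ for some $c'$ proportional to $c$.

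The hypothesis $|X_1|\cdot|X_2|\cdot(|X_3|\cdots|X_m|)^{1/81} > p^{1+c}$ is precisely what survives those $m-2$ recursive Cauchy--Schwarz iterations: the exponent $1/81$ tracks the loss of size at each step when one invokes Pl\"unnecke--Ruzsa-type estimates inside $\F[p]^*$ to control multiplicative doubling. The exponent $0.45\,c/2^m$ in the final saving reflects the $m$ successive halvings of the available power of $p$; and the range $m \le 1.44 \log\log p$ ensures $2^m \le \log p$, so that $p^{-0.45 c/2^m}$ still dominates the logarithmic losses accumulated from the completion/orthogonality steps.

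The main obstacle, and the reason I would cite Garaev rather than reprove this bound from scratch, is tracking the quantitative constants through the recursion so that the explicit exponent $0.45 c/2^m$ genuinely survives: the Bourgain--Glibichuk--Konyagin input only provides an abstract power saving, and extracting the quantitative dependence $c' = 0.45 c$ demands a careful bookkeeping of the Pl\"unnecke--Ruzsa constants (which is what forces the somewhat unusual exponent $1/81$) together with the combinatorial accounting of the iterated H\"older inequality. I would therefore structure the argument so that the sum--product step enters as a black box and the visible content of the proof is only the reduction to bilinear sums and the energy estimate.
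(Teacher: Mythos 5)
The paper offers no proof of this statement: it is imported verbatim as Theorem~4.1 of Garaev's survey \cite{garaev2010sums} and used downstream (in Lemma~\ref{lemma:multiplicative_group_bound}) purely as a black box, so there is no internal argument to compare yours against. Judged on its own terms, your proposal correctly identifies the circle of ideas behind such multilinear bounds --- isolate one variable, complete the sum and apply Cauchy--Schwarz to reduce to the multiplicative energy of $X_2,\dots,X_m$, then feed in a Bourgain--Glibichuk--Konyagin-type bilinear estimate after an iterated reduction --- and your first display, which yields $|S|^2\le |X_1|\,p\,E(X_2,\dots,X_m)$ after completion and orthogonality, is correct.

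However, as a proof of the stated lemma the proposal has a genuine gap, and it is precisely the part you acknowledge deferring to Garaev: the explicit constants. The only usable content of this lemma within the paper is the explicit exponent $0.45c/2^m$ --- it is what produces the explicit constant $2^{145-82/\varepsilon}$ in Lemma~\ref{lemma:multiplicative_group_bound} and ultimately the quantitative success probability in Theorem~\ref{theorem:main} --- so an ``abstract power saving'' would not suffice. Your recursive energy bound is described only impressionistically: no inequality is written for a single step of the recursion, so one cannot verify that each step costs exactly a factor of two in the exponent (rather than something worse), nor that the side condition $|X_1|\cdot|X_2|\cdot(|X_3|\cdots|X_m|)^{1/81}>p^{1+c}$ is what actually survives the iteration; the attribution of the exponent $1/81$ to Pl\"unnecke--Ruzsa losses is a guess rather than a derivation. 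Since the paper itself treats the result as a citation, the appropriate course is to do the same and present your sketch as motivation only, not as a proof of the stated bound.
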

The following result is a particular case of the one given
in~\cite[Corollary 4.1]{garaev2010sums},
but we prove here an explicit version for this case.
\begin{lemma}
  \label{lemma:multiplicative_group_bound}
  Fixed a real number $1/2>\varepsilon >0$, then
  for any sufficiently big $p$ and  $g\in\F[p]$ of multiplicative order
  $q \ge p^{\varepsilon}$, the following bound,
  \begin{equation*}
    \max_{\gcd(c,p)=1}\left | \sum_{k=1}^{q}\e[p](cg^{k}) \right |\le
    q^{1 - 2^{145-82/\varepsilon}},
  \end{equation*}
  holds.
\end{lemma}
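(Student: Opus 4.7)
The plan is to rewrite the given sum as one over the multiplicative subgroup $H = \langle g \rangle \subset \F[p]^{*}$ of order $q$, and then to apply Garaev's bound (Lemma~\ref{lemma:Garaev}) after amplifying via the group law. Indeed, setting $S := \sum_{y \in H} \e[p](cy) = \sum_{k=1}^{q} \e[p](cg^{k})$, the key identity is that for every positive integer $m$ each $y \in H$ admits exactly $|H|^{m-1}$ factorisations $y = x_1 x_2 \cdots x_m$ with $x_i \in H$, so that
\begin{equation*}
|H|^{m-1}\, S \;=\; \sum_{x_1, \ldots, x_m \in H} \e[p]\bigl(c\, x_1 x_2 \cdots x_m\bigr),
\end{equation*}
and it suffices to bound the right-hand side.

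Next I would apply Lemma~\ref{lemma:Garaev} with $X_1 = \cdots = X_m = H$. Writing $c'$ for the constant appearing in Garaev's statement (to distinguish it from the character multiplier $c$), its hypothesis $|X_1||X_2|(|X_3|\cdots|X_m|)^{1/81} > p^{1+c'}$ becomes $q^{2 + (m-2)/81} > p^{1+c'}$, which, using $q \ge p^{\varepsilon}$, is implied by $\varepsilon\bigl(2 + (m-2)/81\bigr) > 1 + c'$. A concrete choice is $c' := \varepsilon/81$ together with $m := \lceil 81/\varepsilon - 159 \rceil$, which satisfies $3 \le m \le 1.44\log\log p$ for all sufficiently large $p$ (since $\varepsilon < 1/2$ forces $81/\varepsilon > 162$). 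Garaev's theorem then yields
\begin{equation*}
\Bigl|\sum_{x_1, \ldots, x_m \in H} \e[p]\bigl(c\, x_1 \cdots x_m\bigr)\Bigr| \;\le\; q^{m}\, p^{-0.45\, c'/2^{m}},
\end{equation*}
and dividing by $q^{m-1}$, together with the crude estimate $p \ge q$ (because $q \mid p-1$), produces the intermediate bound $|S| \le q^{1 - 0.45\, c'/2^{m}}$.

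To conclude, I would verify that the savings exponent $0.45\, c'/2^{m}$ dominates the target $2^{145 - 82/\varepsilon}$. With the above choices the savings is at least $(0.45\varepsilon/81) \cdot 2^{158 - 81/\varepsilon}$, and a short comparison of exponents reduces the required inequality to $\varepsilon \ge 2^{-5.5 - 1/\varepsilon}$, which holds throughout $\varepsilon \in (0, 1/2)$. The main obstacle is precisely this delicate parameter choice: making $m$ small sharpens $p^{-0.45c'/2^{m}}$ exponentially, but Garaev's hypothesis forces $m \gtrsim 81/\varepsilon$, so the two effects must be balanced. I plan to take $m$ as close as possible to this lower bound and $c'$ correspondingly small, exploiting that the admissible $c'$ grows linearly in $m$ while the factor $2^{-m}$ shrinks only exponentially; this is what yields the (rather loose) constant $2^{145}$ in the stated exponent, and a tighter bookkeeping would deliver a stronger bound than what is claimed.
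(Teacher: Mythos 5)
Your proposal is correct and follows essentially the same route as the paper: apply Garaev's bound (Lemma~\ref{lemma:Garaev}) with every $X_i$ equal to the order-$q$ subgroup $\langle g\rangle$ (one copy twisted by the multiplier $c$), amplify via the $q^{m-1}$-to-one product map, and take $m$ essentially minimal subject to Garaev's hypothesis so that the saving $p^{-0.45c'/2^m}$ beats $q^{-2^{145-82/\varepsilon}}$. The only differences are bookkeeping: you choose Garaev's constant to be $\varepsilon/81$ where the paper fixes $1/81$, and you use the exact identity $q^{m-1}S=\sum_{x_1,\ldots,x_m}\e[p](cx_1\cdots x_m)$, which is actually a cleaner (and slightly stronger) form of the amplification step than the one written in the paper.
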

\begin{proof}
  The proof is just the application of lemma~\ref{lemma:Garaev}.
  Fix the value of $c$ to $1/81$, and select an integer $m$ satisfying,
  \begin{equation*}
    q^{2+ (m-2)/81}>p^{82/81}\implies \varepsilon(m+160)> 82,
  \end{equation*}
  where the inequality on the right has been obtained by substituting
  $q=p^{\varepsilon}$ and taking logarithms in the equality on the right.

  Now, considering $X_1 =  X_2= \cdots = X_{m-1} = \{g^{k}\bmod p\;|\; k =
  1,\ldots, q\}$, $X_m = \{c x_1\;|\; x_1\in X_1\}$ and
  lemma~\ref{lemma:Garaev}, gives
  \begin{equation*}
    \left | \sum_{k=1}^{q}\e[p](cg^{k}) \right |^{m}\le q^{m} p ^{-2^{-m}/180}\implies
    \left | \sum_{k=1}^{q}\e[p](cg^{k}) \right |\le q p ^{-2^{-m}/(180m)}.
  \end{equation*}
  Now, select $m$ the minimum integer satisfying $(m+160)\varepsilon>82$. If $p$
  satisfies $m\le 1.44\log\log p$, i. e. it is sufficiently big, substituting
  the minimum value of $m$ and $q\ge p^{\varepsilon}$ give the result.
\end{proof}
The next result is a generalization of a result by Shparlinski and
Nguyen~\cite[lemma 8]{Nguyen02}. In this result, we prove an asymptotic bound
for the discrepancy for the elements of the multiplicative group generated by $g$
for sufficiently big $p$.
\begin{lemma}
\label{lemma:8Nguyen}
  Fixed a real number $1/2>\varepsilon >0$, then
  for any sufficiently big $p$ and  $g\in\F[p]$ of multiplicative order
  $q = p^{\varepsilon}$,  the number of
  solutions of the following equation,
  \begin{equation*}
    g^{k}\mod p\mod q = \rho,\quad \rho\in\F,
  \end{equation*}
  where $1\le k\le  q$ is  $O (q^{1 - 2^{145-82/\varepsilon}}\log p)
  $, where the implied constant is absolute.
\end{lemma}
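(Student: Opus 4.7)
The plan is to count $N(\rho)$, the number of $1\le k\le q$ with $g^k\bmod p\bmod q=\rho$, by encoding the condition via additive character orthogonality and then bounding the resulting error by a product of two classical exponential sums. The condition $g^k\bmod p\bmod q=\rho$ is equivalent to $g^k\bmod p\in\{\rho+jq:0\le j<h\}$ for $h=\lfloor (p-1-\rho)/q\rfloor+1$, so $h\le p/q+1\le p$. Applying the analogue of Lemma~\ref{lemma:linear} for the prime $p$ gives
\begin{equation*}
  N(\rho)=\frac{1}{p}\sum_{k=1}^{q}\sum_{j=0}^{h-1}\sum_{u=0}^{p-1}\e[p]\bigl(u(g^k-\rho-jq)\bigr)=\frac{qh}{p}+\frac{1}{p}\sum_{u=1}^{p-1}\e[p](-u\rho)\,A(u)\,B(u),
\end{equation*}
where $A(u)=\sum_{k=1}^{q}\e[p](ug^k)$, $B(u)=\sum_{j=0}^{h-1}\e[p](-ujq)$, and the $u=0$ main term equals $qh/p=O(1)$.

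The next step is to pull out the maximum of $|A(u)|$ and write
\begin{equation*}
  \left|N(\rho)-\tfrac{qh}{p}\right|\le \frac{1}{p}\Bigl(\max_{u\ne 0}|A(u)|\Bigr)\sum_{u=1}^{p-1}|B(u)|.
\end{equation*}
Lemma~\ref{lemma:multiplicative_group_bound} controls the first factor by $q^{1-\eta}$ with $\eta=2^{145-82/\varepsilon}$. For the second factor, since $-q$ is a nonzero residue modulo $p$ and $1\le h\le p$, the incomplete linear sum bound (Lemma~\ref{lemma:incomplete_linear} instantiated for modulus $p$ with fixed multiplier $-q$, so that $x=u$ plays the role of the summation variable and $y=j$) yields $\sum_{u=1}^{p-1}|B(u)|\ll p\log p$. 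Multiplying the two estimates produces $|N(\rho)-qh/p|\ll q^{1-\eta}\log p$, and adding back the $O(1)$ main term gives $N(\rho)=O(q^{1-\eta}\log p)$, which is exactly the claimed bound.

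The only subtlety I anticipate is that Lemma~\ref{lemma:incomplete_linear} is stated with the ambient field $\F=\F[q]$, whereas the argument needs the same estimate for characters modulo $p$. This is not a genuine obstacle: the proof of that lemma is the standard geometric-series/nearest-integer bound for $|1-\e[q](\cdot)|^{-1}$ and carries over verbatim to any prime modulus, so the analogue for $\F[p]$ holds with the same absolute implied constant. Once this routine extension is noted, the entire argument reduces to the orthogonality decomposition and the two cited exponential-sum bounds, with no further technical ingredient required.
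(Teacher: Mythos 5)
Your proof is correct and follows essentially the same route as the paper's: both encode the condition via orthogonality of additive characters modulo $p$, separate the $O(1)$ main term, and bound the error by the product of the maximal multiplicative-group sum (Lemma~\ref{lemma:multiplicative_group_bound}) and the averaged geometric sums (Lemma~\ref{lemma:incomplete_linear} transported to modulus $p$), yielding $\ll q^{1-2^{145-82/\varepsilon}}\log p$. The only difference is cosmetic (your $h$ versus the paper's $L=\lceil p/q\rceil$), and you additionally make explicit the modulus-$p$ instantiation of Lemma~\ref{lemma:incomplete_linear} that the paper uses silently.
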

\begin{proof}
  Defining $L =\lceil p/q \rceil$, it is only necessary to bound the number of
  solutions of
  \begin{equation*}
    (g^{k} - q x )\mod p = \rho, \quad 0\le x\le L,
  \end{equation*}
  and where $\rho\in\F[p]$ is fixed. By lemma~\ref{lemma:linear}, the number of
  solutions is bounded by
  \begin{equation*}
    \left | \frac{1}{p}\sum_{c=1}^{p}\sum_{x=0}^{L}\sum_{k=1}^{q}
      \e[p](c(g^{k} - q x -\rho))\right | =
    \left | \frac{1}{p}\sum_{c=1}^{p} \e[p](-c\rho)
      \sum_{k=1}^{q}\e[p](cg^k)\sum_{x=0}^{L}\e[p](-qcx)\right |.
  \end{equation*}
Now, we bound this sum using lemmas~\ref{lemma:incomplete_linear}
and~\ref{lemma:multiplicative_group_bound} so
\begin{multline*}
   \left | \frac{1}{p}\sum_{c=1}^{p} \e[p](-c\rho)
      \sum_{k=1}^{q}\e[p](cg^k)\sum_{x=0}^{L}\e[p](-qcx)\right |\\
    \le
    2 + \frac{1}{p} \sum_{c =1}^{p-1}\left |\sum_{k=1}^{q}\e[p](cg^k)\right |
    \left | \sum_{x=0}^{L}\e[p](-qcx) \right |
    \ll q^{1 - 2^{145-82/\varepsilon}}\log p.
\end{multline*}
This finishes the proof.
\end{proof}
Now, we will give an upper bound of the
following exponential
\begin{equation*}
  \sum_{m\in\cH(\cM)}\sum_{k\in\F}\hskip-15 pt {\phantom{{\Sigma^2}}}^*
  \e[q](c(\beta(k,m))+uk),
\end{equation*}
where $\beta(k,m)$ is defined as,
\begin{equation*}
  \beta(k,m) := 2^{-t}(s^{-1}r)\bmod q,
\end{equation*}
and $s,r$ are defined in~\eqref{eq:definition_rs}. The symbol $\Sigma^*$
indicates that the poles are excluded from summation.
\begin{lemma}
\label{lemma:main}
  Fixed a real number $1/2>\varepsilon >0$, then
  for any sufficiently big $p$ and any $g\in\F[p]$ of multiplicative
  order $q = p^{\varepsilon}$, the bound
\begin{equation*}
\max_{\gcd(c,q)=1}
\left |\sum_{m\in\cH(\cM)}\sum_{k\in\F[q]}\hskip-15 pt
  {\phantom{{\Sigma^2}}}^*\e[q](c(\beta(k,m))+uk)\right
|\ll  W^{1/2}q^{3/2 - 2^{145-82/\varepsilon}}\log^2 p,
\end{equation*}
holds, where the constant is absolute.
\end{lemma}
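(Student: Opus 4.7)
The plan is to apply Cauchy--Schwarz on the outer summation over $m$ in order to extract a factor $W^{1/2}=|\cH(\cM)|^{1/2}$, and then bound the resulting $L^{2}$-norm through a collision analysis that combines the Weil bound (Lemma~\ref{lem:Weil}) with the cardinality estimate of Lemma~\ref{lemma:8Nguyen}. Writing
\begin{equation*}
T(m) := \sum_{k\in\F[q]}^{*} \e[q]\bigl(c\beta(k,m)+uk\bigr),
\end{equation*}
and letting $S$ denote the left-hand side of the claimed bound, Cauchy--Schwarz together with extending the $m$-sum from $\cH(\cM)$ to $\F[q]$ by positivity yields $|S|^{2}\le W\sum_{m\in\F[q]}|T(m)|^{2}$.

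Next I would expand $|T(m)|^{2}$ as a double sum over pairs $(k_{1},k_{2})$ and swap the order of summation so that the innermost sum is over $m$. Writing $r_{i}:=g^{k_{i}}\bmod p\bmod q$, this inner sum is
\begin{equation*}
\sum_{m}^{*}\e[q]\left(c\,2^{-t}\left(\frac{k_{1}r_{1}}{m+ar_{1}}-\frac{k_{2}r_{2}}{m+ar_{2}}\right)\right),
\end{equation*}
a rational-function character sum with at most two simple poles. Lemma~\ref{lem:Weil} gives $O(q^{1/2})$ whenever $r_{1}\neq r_{2}$; in the collision case $r_{1}=r_{2}\neq 0$ with $k_{1}\neq k_{2}$, the substitution $n=m+ar$ reduces it to $\sum_{n\neq 0}\e[q](\alpha/n)=-1$; and when the exponent vanishes identically (either $k_{1}=k_{2}$, or $r_{1}=r_{2}=0$) the sum is $O(q)$.

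The final step is to count pairs in each regime using Lemma~\ref{lemma:8Nguyen}, which provides $N_{\rho}:=\#\{k\in\F[q]:g^{k}\bmod p\bmod q=\rho\}=O(q^{1-d}\log p)$ uniformly in $\rho$, with $d=2^{145-82/\varepsilon}$. The diagonal $k_{1}=k_{2}$ contributes $O(q^{2})$; the case $r_{1}=r_{2}\neq 0$, $k_{1}\neq k_{2}$ contributes at most $\sum_{r}N_{r}^{2}\ll q\cdot q^{1-d}\log p=q^{2-d}\log p$ pairs each of size $O(1)$; the generic case $r_{1}\neq r_{2}$ contributes at most $q^{2}$ pairs of size $O(q^{1/2})$, totalling $O(q^{5/2})$; and the degenerate case $r_{1}=r_{2}=0$ contributes at most $N_{0}^{2}=O(q^{2-2d}\log^{2}p)$ pairs each of size $O(q)$, totalling $O(q^{3-2d}\log^{2}p)$. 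Since $3-2d>5/2$ for small $d$, this last term dominates and taking a square root returns the claimed bound (up to at most one power of $\log p$ which may be absorbed for large enough $p$).

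The main obstacle is precisely this collision-and-degeneracy analysis: without the nontrivial upper bound on $N_{\rho}$ provided by Lemma~\ref{lemma:8Nguyen}---which itself rests on Garaev's multiplicative exponential sum estimate through Lemma~\ref{lemma:multiplicative_group_bound}---the pairs with $r_{1}=r_{2}$ would overwhelm the savings of the Weil bound on the off-diagonal and the $q^{-d}$ improvement in the final estimate would be lost.
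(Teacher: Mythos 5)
Your proof follows essentially the same route as the paper's: Cauchy--Schwarz in $m$ to extract $W^{1/2}$, expansion of the square, interchange of the order of summation, Weil's bound (Lemma~\ref{lem:Weil}) on the generic pairs, and Lemma~\ref{lemma:8Nguyen} to count the degenerate pairs with $r(k_1)=r(k_2)$, arriving at the same dominant term $q^{3-2^{146-82/\varepsilon}}\log^2 p$ before the square root. The only (harmless) deviations are that you evaluate the single-pole case $r_1=r_2\neq 0$, $k_1\neq k_2$ exactly as $-1$ where the paper simply applies Weil again, and that the paper's $W$ is the second moment $\sum_{\lambda}H(\lambda)^2$ of the hash multiplicities rather than $|\cH(\cM)|$, which does not affect the argument.
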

\begin{proof}
Taking any integer $c$ coprime with $q$ and
 calling $\sigma$ the
  value of the exponential sum, we have
  \begin{equation*}
    \sigma =\left|
        \sum_{m\in\cH(\cM)}
        \sum_{k\in\F[q]}\hskip-15 pt {\phantom{{\Sigma^2}}}^*
      \e[q](c(\beta(k,m))+uk)
      \right | \le
        \sum_{m\in\cH(\cM)}\left|
          \sum_{k\in\F[q]}\hskip-15 pt {\phantom{{\Sigma^2}}}^*
          \e[q](c(\beta(k,m))+uk)
      \right |.
  \end{equation*}
For $\lambda\in\F$ we denote by $H(\lambda)$ the number
of $m\in\cH(\cM)$ with $m=\lambda$. We also define the integer
$c_0:=2^{-t}c\bmod q$. Then,
\begin{equation*}
  \sigma =\sum_{\lambda\in\F}H(\lambda)\left |
    \sum_{k\in\F[q]}\hskip-15 pt {\phantom{{\Sigma^2}}}^*
    \e[q]\(c_0\frac{kr(k)}{\lambda+ar(k)}+uk\)
  \right|,
\end{equation*}
where $a$ is the private key, the symbol $\Sigma^*$ indicates that the poles are
excluded from summation and $r(k):=g^{k}\bmod p\bmod q$

Now, we apply the Cauchy inequality,
\begin{equation}
\label{eq:after_cauchy}
  \sigma^2\le (\sum_{\lambda\in\F}H(\lambda)^2)
  \sum_{\lambda\in\F}\left |
    \sum_{k\in\F[q]}\hskip-15 pt {\phantom{{\Sigma^2}}}^*
    \e[q]\(c_0\frac{kr(k)}{\lambda+ar(k)}+uk\)
  \right |^2.
\end{equation}
We remark that,
\begin{equation}
\label{eq:remark}
  \sum_{\lambda\in\F}H(\lambda)^2=W.
\end{equation}
We can operate with
the other term in the right side of Equation~\eqref{eq:after_cauchy}.
\begin{multline*}
  \sum_{\lambda\in\F}\left |
    \sum_{k\in\F[q]}\hskip-15 pt {\phantom{{\Sigma^2}}}^*
    \e[q]\(c_0\frac{kr(k)}{\lambda+ar(k)}+uk\)
  \right |^2\le\\
  \sum_{\lambda\in\F}
  \sum_{k_1,k_2\in\F[q]}\hskip-15 pt {\phantom{{\Sigma^2}}}^*
  \e[q]\(c_0\(\frac{k_1r(k_1)}{\lambda+ar(k_1)}
  -\frac{k_2r(k_2)}{\lambda+ar(k_2)}\)
  +u(k_1-k_2)
  \)  =
  \\
  \sum_{k_1,k_2\in\F[q]}
  \sum_{\lambda\in\F}\hskip-15 pt {\phantom{{\Sigma^2}}}^*
  \e[q]\(c_0\(\frac{k_1r(k_1)}{\lambda+ar(k_1)}
  -\frac{k_2r(k_2)}{\lambda+ar(k_2)}\)
  +u(k_1-k_2) \) .
\end{multline*}
We write the inner sum in the following way:
\begin{equation*}
  \sum_{\lambda\in\F}\hskip-15 pt {\phantom{{\Sigma^2}}}^*
  \e[q]\(c_0\(\frac{k_1r(k_1)}{\lambda+ar(k_1)}
  -\frac{k_2r(k_2)}{\lambda+ar(k_2)}\)
  +u(k_1-k_2) \) =
  \sum_{\lambda\in\F}\hskip-15 pt {\phantom{{\Sigma^2}}}^*\e[q](F_{k_1,k_2}(\lambda))
\end{equation*}
where
\begin{equation*}
  F_{k_1,k_2}(X):=
\(c_0\(\frac{k_1r(k_1)}{X+ar(k_1)}
  -\frac{k_2r(k_2)}{X+ar(k_2)}\)
  +u(k_1-k_2) \).
\end{equation*}
Notice that $F_{k_1,k_2}(X)$ is a rational function  in $X$
when $k_1$ and $k_2$ are fixed.
The function is not constant if $r(k_1)\neq r(k_2)$ because then
$F_{k_1,k_2}$ has two different poles.
If $r(k_1)=r(k_2)$, the sum is constant  only in two
cases:
either $k_1=k_2$ or $r(k_1)=r(k_2)=0$. By lemma~\ref{lemma:8Nguyen},
we see that the number of such pairs is $O(q^{2 - 2^{146-82/\varepsilon}}\log^2 p+q)$.
In other case, it is easy to see that $F_{k_1,k_2}(X)$
is not a constant function so it is possible to apply
lemma~\ref{lem:Weil}. This gives,
\begin{equation*}
    \sum_{\lambda\in\F}\left |
    \sum_{k\in\F}\hskip-15 pt {\phantom{{\Sigma^2}}}^*
    \e[q]\(c_0\frac{kr(k)}{\lambda+ar(k)}+uk\)
  \right |^2 \ll 
  \(q^{3 - 2^{146-82/\varepsilon}}\log^2 p \).
\end{equation*}
Substituting this
estimate in Equation~\eqref{eq:after_cauchy} with
Equation~\eqref{eq:remark}, we get the result.
\end{proof}

\subsection{Proof of lemma~\ref{lemma:blocks} for $\ell=1$}

lemma~\ref{lem:Kok-Szu}  shows the relationship between
bounds on exponential sums and bounds on the discrepancy.
So, our goal is to find bounds of the following exponential sum:
\begin{equation*}
  \left |
    \sum_{m\in\cH(\cM)}\sum_{k\in\cK}
    \e[q] \(c\beta(k,m)\)
  \right |=
  \frac{1}{q}\left | \sum_{k\in\F}
    \sum_{m\in\cH(\cM)}\hskip-15 pt {\phantom{{\Sigma^2}}}^*
    \e[q] \(c\beta(k,m)\)
    \sum_{u\in\F}\sum_{k'\in\cK}\e[q](u(k-k'))
  \right |,
\end{equation*}
where $\cK$ is the set of integers defined by
Equation~\eqref{definition:key}.

Notice that if $k$ does not meet the requisites, i. e. it does not has
the correct bits fixed, the inner sum is equal to zero. Otherwise, the
inner sum is equal to one. Doing the following transformations,
\begin{multline*}
   \left | \sum_{k\in\F}
    \(\frac{1}{q}\sum_{u\in\F}\sum_{k'\in\cK}\e(u(k'-k))\)
    \sum_{m\in\cH(\cM)} \e[q] \(c\beta(k,m)\)
\right |=\\
\left | \frac{1}{q}\sum_{u\in\F}\sum_{k\in\F}
  \(\sum_{k'\in\cK}\e(uk'))\)
  \sum_{m\in\cH(\cM)} \e[q]
  \(c\beta(k,m)-uk\)
\right |\le
\\
\frac{1}{q}\sum_{u\in\F}
\left | \sum_{k\in\F}\sum_{m\in\cH(\cM)} \e[q]
  \(c\beta(k,m)-uk\) \right |\left |
  \sum_{k'\in\cK}\e(uk'))
\right |.
\end{multline*}
By lemma~\ref{lemma:main}, we have that
\begin{equation*}
\max_{\gcd(c,q)=1}  \left | \sum_{m\in\cH(\cM)} \sum_{k\in\F}
   \hskip-15 pt {\phantom{{\Sigma^2}}}^*
  \e[q]
  \(c\beta(k,m)-uk\) \right |\ll
W^{1/2}q^{3/2 - 2^{145-82/\varepsilon}}\log^2 p.
\end{equation*}
Recalling lemma~\ref{lemma:incomplete_linear},
\begin{multline*}
  \left | \sum_{k\in\cK}
    \sum_{m\in\cH(\cM)}\hskip-15 pt {\phantom{{\Sigma^2}}}^*
    \e[q] \(c\beta(k,m)\)
  \right |\le\\
  \frac{1}{q}\sum_{u=1}^{q}
  \left | \sum_{k=1}^{q}\sum_{m\in\cH(\cM)} \e[q]
    \(c\beta(k,m)-uk\) \right |\left |
    \sum_{k'\in\cK}\e(uk'))
  \right | \ll
  \\
  W^{1/2}q^{1/2 - 2^{145-82/\varepsilon}}\log^2 p \sum_{u\in\F}\left |
    \sum_{k'\in\cK}\e(uk')\right |\ll
  W^{1/2}q^{3/2 - 2^{145-82/\varepsilon}}\log^3 p.
\end{multline*}
The above bound for the exponential sum and lemma~\ref{lem:Kok-Szu}
show that $\Gamma$ is a
$2^{-\log^{1/2}q}$-homogeneously distributed modulo $q$ provided that
\begin{equation*}
  W\le \frac{|\cM|}{q^{1 - 2^{146-82/\varepsilon}}\log^6 p}.
\end{equation*}

\subsection{Proof of theorem~\ref{theorem:main} and some comments}
\label{sec:main}

Now, we are ready to prove the main result.

  Suppose that the attacker has obtained the following messages with
  their corresponding signatures,
  \begin{equation*}
    (m_1,s_1,r_1),\ldots, (m_n,s_n,r_n).
  \end{equation*}
  Using this information, the attacker builds lattice~$\cL$ using the
  rows of the matrix defined in~\eqref{eq:matrix} and also
  vector~$\vec{t}$ defined in~\eqref{eq:twovectors}.  The attacker can
  find a closest vector in $\cL$ to $\vec{t}$
  and suppose that the second component of this vector is
  $\gamma_2$.  Let $\vec{h}\in\cL$ be the solution found to the
  closest vector problem, so the norm of
  $\vec{h}-\vec{t}$ 
  satisfies,
  \begin{equation}
    \label{eq:bound_uh}
    \|\vec{u}-\vec{h}\|\le \|\vec{u}-\vec{t}\|+\|\vec{t}-\vec{h}\|\le
    \sqrt{n}2^{M-\delta+1} = 2^{M-\delta+\log n+1}.
  \end{equation}
  The attack success if any vector in the lattice with norm less than
  $2^{M-\delta+\log n+1}$ has a zero in the second coordinate.
  By lemma~\ref{lemma:2Nguyen}, the probability of success is greater
  than $1- q(2^{-\delta+\log n+1}+\Delta)^{n-1}$, if $\Gamma$ is a
  $\Delta$-homogeneously distributed. Lemma~\ref{lemma:blocks} implies
  that it is possible to take $\Delta=q^{-d}$ and this finish the
  proof.

We want to mention that if the dimension is greater than $100$, only
approximation algorithms are practical. In those cases, it is
necessary to multiply in  the right hand side of
equation~\eqref{eq:bound_uh} by the factor
appearing in lemma~\ref{lemma:schnorr}. This gives a lower bound in
the probability of~$1- q(2^{-\delta+O(n\log^2\log n)/\log n}+\Delta)^{n-1}.$

\section{Experimental results}
\label{sec:experiments}

We have empirically tested the performance of the attack.

In the first parameters set, the bit size of
$p$ is $1024$ and the bit size of $q$ is $160$. In the second set, the bit
size of $p$ is $4096$ and the bit size of $q$ is $250$. For the hash
function, we have chosen SHA1, because it was widely used in DSA.

We note that the experimental results are better than what we expect from
Theorem~\ref{thm:general}.

The reason is that the lower bound, $d\ge 2^{145-82/\varepsilon}-\log\log p/\log q$,
is very pessimistic. Indeed, for $d\approx 0.5$, we have made the following calculations
in Table~\ref{tab:comparation}. The calculation for the theoretical value of $n$ is
finding the minimum value such that,
$$
1-q(2^{-\delta+\log n+1}+ q^{-d}(\log q)^{\ell})^{n-1}>0.
$$
From the empirical results, we make the following conjecture.
\begin{conjecture}
  Assuming that $\delta\ge \log^{1/2} M$, then given $n$ messages
  with $ n\ge M/\log^{-1/2} (M),$ the probability of
  success is greater than $1- 1/n$.
\end{conjecture}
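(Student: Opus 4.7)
The plan is to split the conjecture into an analytic half and a bookkeeping half. The analytic half is to sharpen Lemma~\ref{lemma:3Nguyen} so that the exponent $d$ can be taken as $1/2-o(1)$, which is the conjectured ``right'' value. The bookkeeping half is then to feed this improved $d$ into Theorem~\ref{theorem:main} and check that the failure probability drops below $1/n$ in the stated regime $\delta\ge\log^{1/2}M$, $n\ge M\log^{1/2}M$.

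For the analytic half I would retrace the chain Lemma~\ref{lemma:multiplicative_group_bound} $\to$ Lemma~\ref{lemma:8Nguyen} $\to$ Lemma~\ref{lemma:main}, since the constant $d$ is ultimately produced inside Lemma~\ref{lemma:multiplicative_group_bound}. Its awkward exponent $2^{145-82/\varepsilon}$ is inherited one-for-one from Garaev's sum-product bound (Lemma~\ref{lemma:Garaev}); if that can be replaced by a genuine square-root cancellation of the form $|\sum_{k=1}^{q}\e[p](cg^{k})|\ll q^{1/2+o(1)}$, then the improvement propagates, via the Cauchy--Weil step of Lemma~\ref{lemma:main} and the Koksma--Sz\"usz inequality (Lemma~\ref{lem:Kok-Szu}), to a discrepancy bound of order $q^{-1/2+o(1)}$, and hence to $d\approx 1/2$.

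The bookkeeping half is then essentially a direct calculation. With $d=1/2$, the term $q^{-d}$ in the bracket of Theorem~\ref{theorem:main} is $\approx 2^{-M/2}$, so $q\cdot(q^{-d})^{n-1}\ll 2^{M-M(n-1)/2}$, which sits far below $1/n$ as soon as $n$ is of modest size. The subtler point is the first term $2^{-\delta+\log n+1}$: since both $\delta=\log^{1/2}M$ and $\log n\approx\log M$ are on comparable scales, one needs to sharpen the crude $\sqrt{n}$ factor in~\eqref{eq:bound_uh} by exploiting that $\vec{u}-\vec{t}$ has individual coordinates bounded by $2^{M-\delta}$ rather than only an $\ell_2$ bound, so that Lemma~\ref{lemma:2Nguyen} can be applied in its native $\ell_\infty$ form and the first term contributes only $2^{-\delta}$.

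The hard part will be the analytic half: obtaining an unconditional Weil-strength bound for exponential sums along a multiplicative subgroup of order $q\ge p^{\varepsilon}$ with $\varepsilon<1/2$ is a well-known open problem in analytic number theory, and the Bourgain--Glibichuk--Konyagin--Garaev technology currently invoked in Section~\ref{sec:exponential} is known to provide only tower-of-two-type savings far weaker than what is needed. It is precisely this gap that forces the authors to state the claim as a conjecture rather than a theorem; any convincing approach will have to either restrict attention to $\varepsilon\ge 1/2$ (where Weil's bound applies directly and the conjecture becomes provable) or import substantial new input from additive combinatorics.
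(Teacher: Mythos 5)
There is nothing in the paper to compare your attempt against: the statement is presented as a \emph{conjecture}, supported only by the computer experiments of Section~\ref{sec:experiments} (Table~\ref{tab:comparation} and Figure~\ref{fig:1}); the authors give no proof and explicitly say the conjecture is extracted ``from the empirical results.'' Your proposal is an accurate diagnosis of why it is a conjecture, but it is not a proof. The entire analytic half --- upgrading Lemma~\ref{lemma:multiplicative_group_bound} from the tower-type saving $q^{1-2^{145-82/\varepsilon}}$ inherited from Garaev's bound (Lemma~\ref{lemma:Garaev}) to square-root cancellation $q^{1/2+o(1)}$ for $\sum_{k\le q}\e[p](cg^k)$ when $q\ge p^{\varepsilon}$ with $\varepsilon<1/2$ --- is precisely the open problem that the conjecture encodes, and you concede as much in your last paragraph. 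So the gap is not a technical oversight; it is the whole mathematical content of the claim, and your plan reduces the conjecture to a harder, well-known open question rather than resolving it.

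Two smaller points on the bookkeeping half, in case you pursue the conditional version (assuming $d=1/2$). First, you are right that in the regime $\delta=\log^{1/2}M$, $\log n\approx\log M$ the term $2^{-\delta+\log n+1}$ in Theorem~\ref{theorem:main} exceeds $1$ and the bound is vacuous, so the $\sqrt{n}$ in~\eqref{eq:bound_uh} must be eliminated. But your proposed fix does not come for free: the triangle inequality gives coordinatewise control only on $\vec{u}-\vec{t}$, not on $\vec{t}-\vec{h}$, where $\vec{h}$ is whatever the CVP solver returns; an exact $\ell_2$-closest vector still only satisfies $\|\vec{t}-\vec{h}\|_2\le\sqrt{n}\,2^{M-\delta}$, which gives no per-coordinate improvement. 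To apply Lemma~\ref{lemma:2Nguyen} with the first term equal to $2^{-\delta}$ you would need to solve CVP in the $\ell_\infty$ norm (or argue separately about the returned vector), which changes the algorithm and its complexity analysis. Second, even granting that, the condition $\delta\ge\log^{1/2}M$ makes $q\,2^{-\delta(n-1)}=2^{M-(n-1)\log^{1/2}M}$ small only because $n\ge M\log^{1/2}M$ is assumed; this part of your calculation is fine, but it shows the conjectured threshold is driven by the first term of the bracket, not by $q^{-d}$, so even a proof of $d=1/2$ would leave the $\ell_\infty$-CVP issue as a genuine obstacle to a rigorous statement.
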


\begin{table}
  \centering
  \begin{tabular}{|c|c|c|}
    \hline
    Known bits& Mean value of $n$  in simulations&Value of $n$  by Theorem~\ref{thm:general}\\\hline
    30& 7.0& 7\\\hline
    28& 7.4& 8\\\hline
    26& 8.0& 8\\\hline
    24& 8.5& 9\\\hline
    22& 9.0& 10\\\hline
    20& 9.8& 11\\\hline
    18& 11.0& 12\\\hline
    16& 12.5& 14\\\hline
    14& 14.3& 17\\\hline
    12& 17.1& 21\\\hline
    10& 20.7& 30\\\hline
    8& 27.3& 54\\\hline
    6& 50& -\\\hline
  \end{tabular}
  \caption[Theoretical vs. Simulations]{Comparation of theoretical values and mean of computer simulations when
    employing  $p$ of 1024 bits and $q$ of 160 bits. 
    The mean value of messages needed has been simulated adding
  another message until the attack is successful.}
  \label{tab:comparation}
\end{table}
Figure~\ref{fig:1} show $M-\delta$ in the abscissa against the minimum number of signatures required to
recover the ephemeral keys. For each experiment, we have selected randomly the value $k',\ k''$  and $\tilde{k_i}$ for $i=1,\ldots,n$ and
repeated each experiment $10$ times.
\begin{figure}[t]
  \centering\includegraphics[width=100mm]{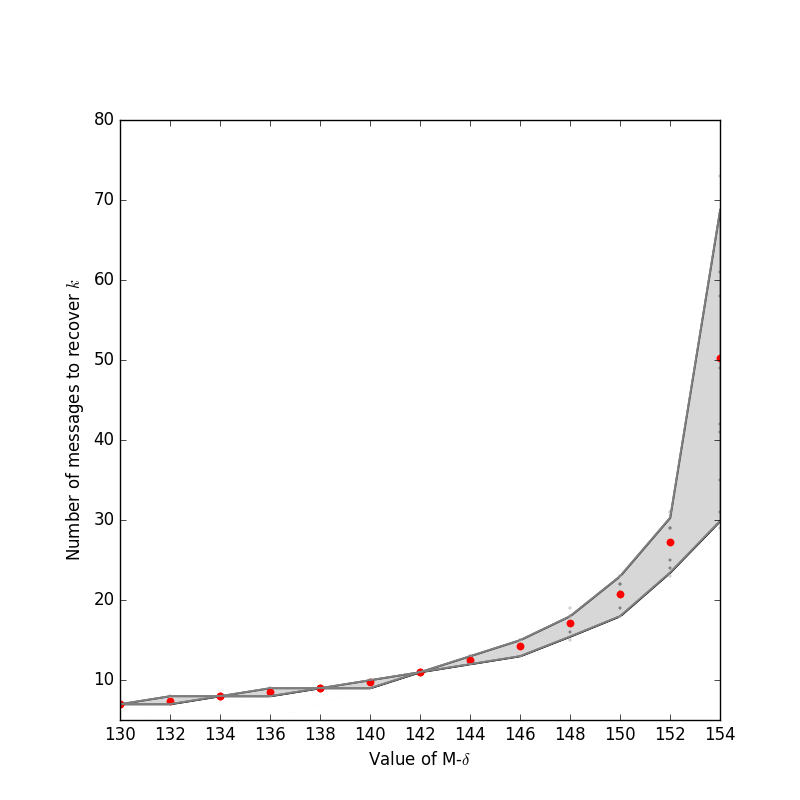}
    \centering\includegraphics[width=100mm]{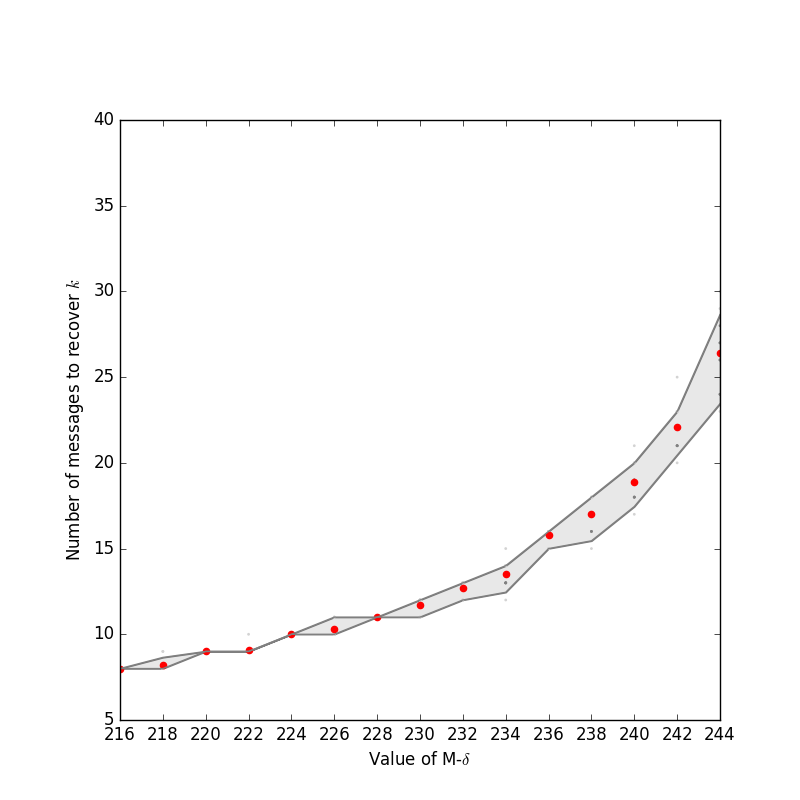}
  \caption{Experimental results of the algorithm. The first plot employs the smaller parameter set, with $p$ of 1024 bits and $q$ of 160 bits while the second plot employs the parameter set with $p$ of 4096 bits and $q$ of 250 bits.
    The red dots represent the average of
    the number of message signatures needed to recover the ephemeral keys.
    The grey areas contain the 90
    percentile of the number of message signatures needed}
  \label{fig:1}
\end{figure}

\section*{Acknowledgments}
We want to thank Igor Shparlinski for his time,
ideas and comments during the development of the paper.
The research of the first author was supported by the Ministerio de Economia y Competitividad research project MTM2014-55421-P.
\newpage

\end{document}